\documentclass[journal,twoside,web,twocolumn]{ieeecolor}
\usepackage{generic}
\usepackage{cite}
\usepackage{amsmath,amssymb,amsfonts}
\usepackage{algorithmic}
\usepackage{graphicx}
\usepackage{algorithm,algorithmic}
\usepackage{hyperref}
\hypersetup{hidelinks=true}
\usepackage{textcomp}
\usepackage{ulem}
\usepackage{cancel}
\usepackage{mathtools}
\usepackage{tikz}
\newtheorem{remark}{Remark}

\def\BibTeX{{\rm B\kern-.05em{\sc i\kern-.025em b}\kern-.08em
    T\kern-.1667em\lower.7ex\hbox{E}\kern-.125emX}}
\markboth{\hskip25pc IEEE TRANSACTIONS AND JOURNALS TEMPLATE}
{Yifei Duan and Yongqiang Cai: A Minimal Control Family of Dynamical Systems for Universal Approximation}
\newtheorem{theorem}{Theorem}
\newtheorem{lemma}{Lemma}
\newtheorem{corollary}{Corollary}
\newtheorem{definition}{Definition}
\newtheorem{proposition}{Proposition}

\renewcommand*{\emph}[1]{\textit{#1}}

\begin{document}
\title{A Minimal Control Family of Dynamical Systems for Universal Approximation}
\author{Yifei Duan, Yongqiang Cai
\thanks{The authors are with the School of Mathematical Sciences, Laboratory of Mathematics and Complex Systems, MOE, Beijing Normal University, 100875 Beijing, China (Email:
caiyq.math@bnu.edu.cn.)}}

\maketitle

\begin{abstract}
    The universal approximation property (UAP) holds a fundamental position in deep learning, as it provides a theoretical foundation for the expressive power of neural networks. It is widely recognized that a composition of linear and nonlinear functions, such as the rectified linear unit (ReLU) activation function, can approximate continuous functions on compact domains. In this paper, we extend this efficacy to a scenario containing dynamical systems with controls. We prove that the control family $\mathcal{F}_1$ containing all affine maps and the nonlinear ReLU map is sufficient for generating flow maps that can approximate orientation-preserving (OP) diffeomorphisms on any compact domain. Since $\mathcal{F}_1$ contains only one nonlinear function and the UAP does not hold if we remove the nonlinear function, we call $\mathcal{F}_1$ a minimal control family for the UAP. On this basis, several mild sufficient conditions, such as affine invariance, are established for the control family and discussed. Our results reveal an underlying connection between the approximation power of neural networks and control systems and could provide theoretical guidance for examining the approximation power of flow-based models.
	
\end{abstract}

\begin{IEEEkeywords}
    Function approximation, diffeomorphism, flow map, affine control system.
\end{IEEEkeywords}

\section{Introduction}

The universal approximation property (UAP) of neural networks plays a crucial role in the field of deep learning. This property implies that given sufficient parameters, a neural network can approximate any continuous function with arbitrary precision \cite{Cybenkot1989Approximation, Hornik1989Multilayer, Leshno1993Multilayer, Lu2017Expressive}. Although polynomials and one-hidden-layer networks also exhibit the UAP, the preference for deep network structures in practical deep learning situations arises from the superior performance and expressive capabilities of deep networks. Some studies have shown that deep neural networks are more expressive than shallow neural networks are \cite{telgarsky2016benefits, daniely2017depth}, but this is far from a complete understanding of the properties of composition functions.

Mathematically, the flow map of a dynamical system provides a natural framework for studying function composition. A natural idea is whether we can draw inspiration from the flow maps of dynamical systems to study the properties of composite functions such as deep neural networks. This idea not only holds theoretical significance but also practical relevance. In the field of engineering, it was initially discovered that residual networks (ResNets) with skip connections could achieve superior performance to that of other networks \cite{he2016deep,he2016identity}. Later, mathematicians recognized that this type of structure is closely connected to dynamical systems \cite{Weinan2017A, haber2017stable}: a ResNet can be viewed as the forward Euler discretization of a dynamical system. This connection has given rise to many novel network architectures \cite{Luo2022Rethinking}, such as 
PolyNet \cite{Zhang2017PolyNet}, RevNet \cite{Gomez2017Reversible} and LM-ResNet \cite{lu2018beyond}. In particular, continuous-time residual networks, known as neural ordinary differential equations (ODEs), have garnered widespread attention \cite{Chen2018Neural}. Compared to ResNet, neural ODE is more conducive to analysis from the perspective of dynamical systems.

From the perspective of approximation theory, both residual networks and neural ODEs are universal approximators \cite{Lin2018ResNet, Teshima2020Couplingbased, Zhang2020Approximation, Li2022Deep, Tabuada2022Universal, Ruiz-Balet2023Neural}. In particular, control theory can be employed as a tool for analyzing the approximation power of dynamical system flow maps \cite{Cuchiero2020Deep, Li2022Deep, Tabuada2022Universal, Cheng2023Interpolation}. For a dynamical system $\dot{x}(t) = f(x(t))$ with an initial value $x(0)=x_0 \in \mathbb{R}^d$, by taking the field function $f$ as a one-hidden-layer feedforward network with a nonlinear activation function $\sigma$, such as the hyperbolic tangent function, one obtains the following neural ODE:
\begin{align}
    \label{eq:NODE_tanh}
    \dot{x}(t) = f_{\theta(t)}(x(t)) = S(t) \sigma(W(t)x+b(t)),
\end{align}
where $\theta(t) := (S(t),W(t),b(t))\in \mathbb{R}^{d \times N}\times \mathbb{R}^{N\times d} \times \mathbb{R}^N  \eqqcolon \Theta$ are time-varying control parameters and $N$ is the number of hidden neurons. The results of \cite{Tabuada2022Universal} showed that the flow map of Eq.~\eqref{eq:NODE_tanh} with $N=d$ can uniformly approximate monotonic functions if the utilized activation function satisfies a quadratic differential equation. When conducting an approximation under the $L^p$ norm with $p\in[1,\infty)$, taking the function $\sigma$ as a Lipschitz nonlinear function is sufficient for approximating continuous functions if the dimensionality $d\ge2$ \cite{Li2022Deep}. In addition, if $\sigma$ is a so-called ``well function'', the parameters $S$ and $W$ can be restricted to diagonal matrices with entries whose absolute values are less than one \cite{Li2022Deep}. However, in these approach, the control family $\mathcal{F}:=\{ f_{\theta} ~|~ \theta \in \Theta\}$
is complex enough in that its span already has rich approximation power.
In this work, we explore the possibility of imposing weaker conditions on control families such that the compositions of their flow maps are universal approximators.
On the basis of a novel route implemented via orientation-preserving (OP) diffeomorphisms and novel techniques involving splitting methods, we obtain some weaker conditions but stronger results (which will be observed later).

The flow maps of a smooth dynamical system are OP diffeomorphisms. The inverse states that any OP diffeomorphisms of $\mathbb{R}^d$ can be represented or approximated by composing a finite number of flow maps. Let $\{f_1,...,f_m\}$ be a set that satisfies some conditions (\emph{e.g.,} let it be a Lie bracket-generating family according to Definition \ref{def:Lie_family}) of vector fields on $\mathbb{R}^d$.
The results of \cite{Agrachev2010Dynamics, Caponigro2011Orientation} show that there exist $m$ functions $u_1(t,x),...,u_m(t,x)$ called controls, which are
piecewise constant in $t$ and smooth in $x\in\mathbb{R}^d$, such that the flow $\Phi$ of the following affine control system~(\ref{eq:affine_control_system_varying}) at time 1 (\emph{i.e.,} the map from $x(0)$ to $x(1)$) can uniformly approximate any OP diffeomorphism $\Psi \in \text{Diff}_0(\mathbb{R}^d)$ on any compact domain $\Omega$:
\begin{align}
    \label{eq:affine_control_system_varying}
    & \dot x(t) = \sum_{i=1}^{m} u_i(t,x) f_i(x).
\end{align}
Although the set of all OP diffeomorphisms of $\mathbb{R}^d$, $\text{Diff}_0(\mathbb{R}^d)$, is a small subset of $C(\mathbb{R}^d,\mathbb{R}^d)$ (the set of continuous functions from $\mathbb{R}^d$ to $\mathbb{R}^d$), it is dense in $C(\Omega,\mathbb{R}^d)$ under the $L^p$ norm (but not dense under the $L^\infty$ norm or uniform norm) for any compact domain $\Omega \subset \mathbb{R}^d$ provided that the dimensionality $d$ is larger than one \cite{Brenier2003Approximation}.
Following this derivation, to approximate continuous functions, one only needs to approximate the flows of system~\eqref{eq:affine_control_system_varying} or approximate the corresponding OP diffeomorphisms. In this paper, we answer the following question: \emph{how simple can a control family $\mathcal{F}$ be while allowing the compositions of its flows to arbitrarily approximate any OP diffeomorphisms?}

We show that the expected control family can be very simple, such as $\mathcal{F} = \mathcal{F}_1$ for $d \ge 2$, which contains affine maps and one nonlinear function:
\begin{align*}
    \mathcal{F}_1 = \{x \mapsto Ax+b: A\in \mathbb{R}^{d\times d}, b \in \mathbb{R}^d\} \cup \{ \text{ReLU}(\cdot)\},
\end{align*}
where ReLU is an elementwise function that maps $x=(x_1,...,x_d)$ to $ \text{ReLU}(x)=(\max(x_1,0),...,\max(x_d,0))$. As an implication of this simple control family, there exists a family $\{f_1,...,f_m\}$ such that the flow of the affine control system~\eqref{eq:affine_control_system_varying} can approximate OP diffeomorphisms even if all the controls $u_1(t,x),...,u_m(t,x)$ are independent of $x$, \emph{i.e.}, $u_i(t,x) = u_i(t)$.
This approximation property complements traditional control theory, where the focus is on examining the controllability of a system \cite{Boscain2019Introduction}.

By exploring the UAP of flow map compositions and leveraging concepts from control theory, the results of this study can enhance our understanding of the expressive power and approximation capabilities of function compositions. Overall, the integration of control theory, dynamical systems, and deep learning provides a promising direction for advancing our knowledge and unlocking the full potential of deep compositions in various domains.

Our contributions are as follows.
\begin{enumerate}
    \item We show that a very simple control family that contains only affine maps and one nonlinear function is sufficient for generating flow maps that can uniformly approximate the diffeomorphisms of $\mathbb{R}^d$ on any compact domain (Theorem \ref{th:UAP_ReLU_family}). This control family is the minimal family for the UAP in the sense that it contains only one nonlinear function, whereas the affine maps alone do not satisfy the UAP.
    \item We provide some other sufficient conditions imposed on the control family such that the hypothesis space satisfies the UAP (Theorem \ref{th:UAP_span_F_to_HF} and Theorem \ref{th:UAP_affine}). Our results reveal an underlying connection between the approximation power of neural networks and the flow maps of control systems.
    \item Our theorems imply many mild UAP-related conditions that are easy to verify. This can provide theoretical guidance for examining the approximation power of flow-based models (e.g., \cite{Chen2018Neural,Li2022Deep,Tabuada2022Universal}).
\end{enumerate}

This paper is organized as follows. In Section \ref{sec:main}, we present the notations and our main results concerning the UAP. The proofs of the theorems are provided in Section~\ref{sec:proof}.
We discuss the possible extensions and applications of our results in Section~\ref{sec:discussion} and provide a summary in Section~\ref{sec:summary}.

\section{Notations and main results}
\label{sec:main}

\subsection{Universal approximation property}

In this work, all functions are assumed to be continuous and restricted to compact domains. In addition, we define the UAP as follows \cite{Cai2024Vocabulary}.
\begin{definition}[Universal approximation property, UAP]
    For any compact domain $\Omega \subset \mathbb{R}^d$, target function space $\mathcal{T}$ and hypothesis space $\mathcal{H}$, we say that
    1) $\mathcal{H}$ has a $C(\Omega)$-UAP for $\mathcal{T}$ if for any $g \in \mathcal{T}$ and $\varepsilon>0$, there is a function $h \in \mathcal{H}$ such that $\|g-h\|_{C(\Omega)} \le \varepsilon$, \emph{i.e.}
        \begin{align}
            \|g(x)-h(x)\|_\infty \le \varepsilon, \quad \forall x \in \Omega.
        \end{align}
        
     2) $\mathcal{H}$ has an $L^p(\Omega)$-UAP for $\mathcal{T}$ with $p \in [1,+\infty)$ if for any $g \in \mathcal{T}$ and $\varepsilon>0$, there is a function $h \in \mathcal{H}$ such that
        \begin{align} 
            \|g-h\|_{L^p(\Omega)} = \Big(\int_\Omega \|g(x)-h(x)\|_\infty^p dx\Big)^{1/p} \le \varepsilon.
        \end{align}
\end{definition}
In the absence of ambiguity, we can omit $\Omega$ to say the ``$C$-UAP'' and ``$L^p$-UAP''. If the target space $\mathcal{T} = C(\Omega)$ (or $L^p(\Omega)$) and $\mathcal{H} \subset \mathcal{T}$, then saying that $\mathcal{H}$ has a $C$-UAP (or an $L^p$-UAP) for $\mathcal{T}$ is equivalent to saying that $\mathcal{H}$ is dense in $\mathcal{T}$.

Note that here we use the term ``$C$-UAP'' instead of ``$L^\infty$-UAP'' for two reasons. 
(1) $C$ and $L^p$ represent both norms and the target function spaces; the norms $L^\infty$ and $C$ have subtle differences, and the space $L^\infty(\Omega, \mathbb{R}^d)$ is significantly different from $C(\Omega, \mathbb{R}^d)$. 
(2) The $L^\infty$-UAP is stronger than the $L^p$-UAP, $L^\infty$ is not used so that we can avoid having to specifically emphasize that $p$ does not include $\infty$ when referring to $L^p$.

Notably, the $C$-UAP implies the $L^p$-UAP since we are considering functions on compact domains. The well-known Stone-Weierstrass theorem \cite{stone1948generalized} indicates that polynomials satisfy the $C$-UAP for continuous functions. In addition, neural networks, such as the following one-hidden-layer network $h(x)$, also follow the $C$-UAP for continuous functions \cite{Leshno1993Multilayer}:
\begin{align}
    \label{eq:ReLU-NN}
    h(x; s,w,b) = 
    \sum\limits_{i=1}^N s_i \sigma(w_i \cdot x+b_i),
\end{align}
where $N$ is the number of neurons, $\sigma(\cdot)$ is a nonpolynomial activation function, and $s_i, w_i \in \mathbb{R}^{d}, b_i \in \mathbb{R}$ are parameters. Note that neural networks achieve the UAP via function composition, which is different from the linear combination method used for polynomials. According to the definition of continuous functions, we have the following proposition \cite[Lemma 3.11]{Duan2022Vanilla}, which indicates that the UAP can be inherited via function compositions.

\begin{proposition}
    \label{th:composition_approximation}
    Let the map $T = F_n \circ ... \circ F_1$ be a composition of $n$ continuous functions $F_i$ defined on open domains $D_i$, and let $\mathcal{F}$ be a continuous function class that can uniformly approximate each $F_i$ on any compact domain $\mathcal{K}_i \subset D_i$ (under the uniform norm). Then, for any compact domain $\mathcal{K} \subset D_1$ and $\varepsilon >0$, there are $n$ functions $\tilde F_1, ..., \tilde F_n$ in $\mathcal{F}$ such that
    $\|T(x) - \tilde F_n \circ ... \circ \tilde F_1 (x)\|_\infty \le \varepsilon$ for all $x \in \mathcal{K}$.
\end{proposition}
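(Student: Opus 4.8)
The plan is to argue by induction on the number $n$ of factors; the whole point is to control how a small perturbation of one factor propagates through the remaining compositions, which is where uniform continuity on compact sets and a small ``safety margin'' around each intermediate image enter. The base case $n=1$ is precisely the hypothesis that $\mathcal{F}$ approximates $F_1$ uniformly on the compact set $\mathcal{K}\subset D_1$. So assume the statement for $n-1$ factors and write $T=F_n\circ H$ with $H=F_{n-1}\circ\cdots\circ F_1$, a composition of $n-1$ continuous maps to which the inductive hypothesis applies.

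First I would build the intermediate compact set: since $H$ is continuous and $\mathcal{K}$ is compact, $H(\mathcal{K})$ is a compact subset of the open domain $D_n$ of $F_n$, so there is $\rho>0$ for which the closed $\rho$-neighborhood $\mathcal{K}_n$ of $H(\mathcal{K})$ is still a compact subset of $D_n$. Next, using uniform continuity of $F_n$ on $\mathcal{K}_n$, I would pick $\delta\in(0,\rho]$ so that $\|F_n(y)-F_n(y')\|\le\varepsilon/2$ whenever $y,y'\in\mathcal{K}_n$ with $\|y-y'\|\le\delta$. Then I would invoke, in this order: the hypothesis on $\mathcal{F}$ to get $\tilde F_n\in\mathcal{F}$ with $\|F_n(y)-\tilde F_n(y)\|\le\varepsilon/2$ for $y\in\mathcal{K}_n$; and the inductive hypothesis, applied to $H$ on $\mathcal{K}$ with tolerance $\delta$, to get $\tilde F_1,\dots,\tilde F_{n-1}\in\mathcal{F}$ so that $\tilde H:=\tilde F_{n-1}\circ\cdots\circ\tilde F_1$ satisfies $\|H(x)-\tilde H(x)\|\le\delta$ for all $x\in\mathcal{K}$. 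Since $\delta\le\rho$, the perturbed iterate $\tilde H(x)$ stays inside $\mathcal{K}_n$, so the two-term triangle inequality $\|F_n(H(x))-\tilde F_n(\tilde H(x))\|\le\|F_n(H(x))-F_n(\tilde H(x))\|+\|F_n(\tilde H(x))-\tilde F_n(\tilde H(x))\|\le\varepsilon/2+\varepsilon/2=\varepsilon$ closes the induction with $\tilde T=\tilde F_n\circ\tilde H$.

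I expect the only genuine subtlety to be the order of the quantifiers: the enlarged set $\mathcal{K}_n$ and the modulus $\delta$ of the \emph{outer} map $F_n$ must be fixed \emph{before} the \emph{inner} composition $H$ is approximated, and $H$ must be approximated to within $\delta$ rather than within $\varepsilon$, so that $\tilde H(x)$ never leaves the region on which $F_n$ and $\tilde F_n$ are $\varepsilon/2$-close. Everything else — compactness of continuous images, using openness of the $D_i$ to create the margin $\rho$, uniform continuity on compacta, and splitting $\varepsilon$ in half — is routine; and since no approximation rate is claimed, there is no need to track constants carefully.
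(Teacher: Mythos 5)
Your proof is correct. The paper itself does not supply a proof of this proposition---it is quoted from an external reference (Duan and Li, Lemma~3.11 in \cite{Duan2022Vanilla})---so there is no in-paper argument to compare against; but your inductive argument is the standard and natural one, and it is carried out with the right care. In particular, you correctly flag the genuine subtlety: the compact ``padded'' set $\mathcal{K}_n$ and the uniform-continuity modulus $\delta$ of the \emph{outer} map $F_n$ must be fixed \emph{before} invoking the inductive hypothesis on the inner composition $H$, and $H$ must then be approximated to within $\delta\le\rho$ (not $\varepsilon$) so that $\tilde H(\mathcal{K})$ remains inside $\mathcal{K}_n$, where $F_n$ and $\tilde F_n$ are uniformly $\varepsilon/2$-close. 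With that ordering in place, the two-term triangle inequality closes the induction. One implicit point worth keeping in mind, which causes no trouble here: members of $\mathcal{F}$ in this paper are maps $\mathbb{R}^d\to\mathbb{R}^d$ defined globally, so the perturbed composition $\tilde H=\tilde F_{n-1}\circ\cdots\circ\tilde F_1$ is automatically well-defined on all of $\mathcal{K}$ regardless of where the intermediate iterates land; if $\mathcal{F}$ consisted of partially defined maps one would also have to track domains at each stage, but that is not the situation here.
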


\subsection{Control families and diffeomorphisms}
Here, we recall the flow maps of dynamical systems and their related diffeomorphisms. We consider the following ODE in dimension $d$:
\begin{align}\label{eq:ODE_general}
    \left\{
    \begin{aligned}
    &\dot{x}(t) = v(x(t),t), \quad t\in(0,\tau), \tau>0,\\
    &x(0)=x_0 \in \mathbb{R}^d.
    \end{aligned}
    \right.
\end{align}
where the vector-valued field function $v(x,t)$ is Lipschitz continuous in $x$ and piecewise constant in $t$, which ensures that the corresponding system has a unique solution $x(t)$ for any initial value $x_0$. The map from $x_0$ to $x(\tau)$ is called a \emph{flow map}, and we denote it by $\phi_v^\tau(x_0)$. Later, we consider the case in which all $v(\cdot,t), t\ge 0,$ belong to a prescribed set $\mathcal{F}$. When $v(x,t)=f(x)$ is independent of $t$, it is easy to check that the associated flow map has the following group property for composition.
\begin{proposition}
    Let $f(\cdot)$ be Lipschitz continuous, and let $\phi_f^t$ be the flow map of $\dot x(t)=f(x(t))$; then, $\phi_f^{0}$ is the identity map, and $\phi_f^{t_1}\circ \phi_f^{t_2} = \phi_f^{t_1+t_2}$ for any $t_1, t_2 \ge 0$.
\end{proposition}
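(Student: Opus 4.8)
The plan is to obtain both statements from the existence and uniqueness theorem for solutions of the autonomous ODE $\dot x(t) = v(x(t))$, which applies since $v$ is Lipschitz continuous, combined with the invariance of an autonomous system under time translation. The claim that $\phi_v^0$ is the identity map is immediate from the definition of the flow map: for any $x_0 \in \mathbb{R}^d$, $\phi_v^0(x_0)$ is the value at time $0$ of the solution starting at $x_0$, which is $x_0$ itself.

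For the composition identity, I would fix $x_0 \in \mathbb{R}^d$ and let $x(\cdot)$ denote the unique solution of $\dot x = v(x)$ with $x(0) = x_0$, so that $x(s) = \phi_v^s(x_0)$ for all $s \ge 0$. The key step is to introduce the shifted curve $y(t) := x(t + t_2)$; by the chain rule and the fact that the field $v$ does not depend on $t$, one computes $\dot y(t) = \dot x(t + t_2) = v(x(t + t_2)) = v(y(t))$, so $y$ solves the same ODE with the new initial value $y(0) = x(t_2) = \phi_v^{t_2}(x_0)$. Applying uniqueness with this initial datum gives $y(t) = \phi_v^{t}(\phi_v^{t_2}(x_0))$ for every $t \ge 0$. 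Evaluating at $t = t_1$ yields $\phi_v^{t_1 + t_2}(x_0) = x(t_1 + t_2) = y(t_1) = \phi_v^{t_1}(\phi_v^{t_2}(x_0))$, and since $x_0$ was arbitrary this is the asserted equality of maps.

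The only point requiring a little care is that all the flow maps $\phi_v^s$ are genuinely defined, i.e., that the solution through $x_0$ exists on the full interval $[0, t_1 + t_2]$; this is ensured as soon as $v$ is (globally) Lipschitz, so that no finite-time blow-up can occur and $\phi_v^s$ makes sense for every $s \ge 0$. Apart from this, the argument is routine — the semigroup property merely formalizes the intuition that flowing for time $t_2$ and then for time $t_1$ is the same as flowing for time $t_1 + t_2$ — and I do not anticipate any genuine obstacle.
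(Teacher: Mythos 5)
The paper does not actually supply a proof of this proposition (it is stated with the remark that ``it is easy to check''), so there is no in-paper argument to compare against. Your proof is correct and is the standard one: $\phi_v^0=\mathrm{id}$ is definitional, and the semigroup identity follows from time-translation invariance of the autonomous field together with uniqueness for the Lipschitz ODE applied to the shifted trajectory $y(t)=x(t+t_2)$; your final remark that global Lipschitz continuity guarantees $\phi_v^s$ is defined for all $s\ge 0$ correctly dispatches the only subtlety.
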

In addition, when $v(x,t)$ is piecewise constant in $t$, the flow map $\phi_v^\tau$ can be represented as the following composition of $\phi_{f_i}^{\tau_i}, \tau_i \ge 0$:
\begin{align}
    \label{eq:flow_map_composition}
    \phi_{v(\cdot,t)}^\tau(x) 
    =
    \phi_{f_n(\cdot)}^{\tau_n} \circ \cdots \circ
    \phi_{f_2(\cdot)}^{\tau_2} \circ
    \phi_{f_1(\cdot)}^{\tau_1} (x),
\end{align}
where $\tau = \sum_{i=1}^n \tau_i$ and $v(x,t) = f_i(x)$ if $t$ is within the interval $\big(\sum_{j=1}^{i-1} \tau_j, \sum_{j=1}^i \tau_j \big)$.

A differentiable map $\Phi: \mathbb{R}^d \to \mathbb{R}^d$ is called a diffeomorphism if it is a bijection and its inverse $\Phi^{-1}$ is differentiable as well. In addition, a diffeomorphism $\Phi$ of $\mathbb{R}^d$ is called OP if its Jacobian (the determinant of the Jacobian matrix) $\det(\nabla \Phi(x))$ is positive for all $x \in \mathbb{R}^d$. As mentioned earlier, $\text{Diff}_0(\mathbb{R}^d)$ has the $L^p$-UAP for $C(\Omega,\mathbb{R}^d)$ when $d\ge 2$. However, $\text{Diff}_0(\mathbb{R}^d)$ has no $C$-UAP for $C(\Omega,\mathbb{R}^d)$ because some continuous functions, such as $g(x_1,x_2)=(x_1^2+x_2^2,0), \Omega=[-2,2]^2$, cannot be arbitrarily approximated by diffeomorphisms under the uniform norm (see \cite{Cai2023Achieve} for example).

The classic results show that the flow map ${\phi_v^\tau}$ is an OP diffeomorphism if the vector function ${v}$ is twice continuously differentiable with respect to $x$ \cite[Ch.4]{Arnold1992Ordinary}. In contrast, OP diffeomorphisms can be uniformly approximated by the flow maps of ODEs \eqref{eq:ODE_general} with smooth field functions $v(x,t)$ \cite{Agrachev2010Dynamics}. To state this result more clearly, we introduce the definition of a \emph{control family} and its deduced hypothesis space below.
\begin{definition}[Control family]
    We call a set $\mathcal{F} = \{f_\theta : \mathbb{R}^d \to \mathbb{R}^d | \theta \in \Theta\}$ of continuous functions parameterized by $\theta \in \Theta$ a control family. In addition, we say that
    \begin{enumerate}
        \item $\mathcal{F}$ is symmetric if $f \in \mathcal{F}$ implies that $-f \in \mathcal{F}$,
        \item $\mathcal{F}$ is affine invariant if $f \in \mathcal{F}$ implies that $ D f(A\cdot +b) \in \mathcal{F}$ for any matrices $D,A \in \mathbb{R}^{d\times d}$ and $b \in \mathbb{R}^{d}$,
        \item $\mathcal{F}$ is diagonal affine invariant if $f \in \mathcal{F}$ implies that $ D f(A\cdot +b) \in \mathcal{F}$ for any diagonal matrices $D,A \in \mathbb{R}^{d\times d}$ and $b \in \mathbb{R}^d$,
        \item and $\mathcal{F}$ is restricted affine invariant \cite{Li2022Deep} if $f \in \mathcal{F}$ implies that $ D f(A\cdot +b) \in \mathcal{F}$ for any diagonal matrices $D, A \in \mathbb{R}^{d\times d}$ and $b \in \mathbb{R}^d$, where the entries of $D$ are $\pm 1$ or 0, and the entries of $A$ have absolute values that are smaller than or equal to 1.

\end{enumerate}
\end{definition}

In this paper, we focus on the case in which all $ f \in \mathcal{F}$ are Lipschitz continuous and denote the deduced hypothesis space $\mathcal{H}(\mathcal{F})$ as the set of all compositions of flow maps $\phi^\tau_f, f \in \mathcal{F}$:
\begin{align*}
    \mathcal{H}(\mathcal{F})
    =
    \{
    \phi_{f_n}^{\tau_n} \circ \cdots \circ
    \phi_{f_2}^{\tau_2} \circ
    \phi_{f_1}^{\tau_1} ~|~
    f_i \in \mathcal{F}, \tau_i \ge 0, n \in \mathbb{N}
    \}.
\end{align*}
According to the relation shown in Eq.~\eqref{eq:flow_map_composition}, $\mathcal{H}(\mathcal{F})$ can be rewritten as
\begin{align*}
    \mathcal{H}(\mathcal{F})
    =
    \{
    & \phi^\tau_{f_{\theta(t)}} : x \mapsto z(\tau) 
    ~|~ \notag \\
    &\dot z(t) = f_{\theta(t)}(z(t)), f_{\theta(t)} \in \mathcal{F}, z(0)=x , \tau \ge 0\},
\end{align*}
where $\theta(t)$ is a piecewise constant function. We return to the theorem of \cite{Agrachev2010Dynamics}, which shows that an OP diffeomorphism can be approximated by $\mathcal{H}(\mathcal{P})$, where $\mathcal{P}$ is a set of polynomials. Since their proof only uses the density of polynomials, their theorem can be directly extended to general cases as follows.
\begin{proposition}
    \label{th:UAP_F_to_HF}
    If a control family $\mathcal{F}$ is dense in $C(\mathcal{K},\mathbb{R}^d)$ for every compact domain $\mathcal{K} \subset \mathbb{R}^d$, then the hypothesis space $\mathcal{H}(\mathcal{F})$
has the $C$-UAP for $\text{Diff}_0(\mathbb{R}^d)$.
\end{proposition}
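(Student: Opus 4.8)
The plan is to reduce the statement to the polynomial case recalled just above and to use the density of $\mathcal{F}$ only to replace the individual flow maps. First I would invoke the theorem of \cite{Agrachev2010Dynamics, Caponigro2011Orientation}: given $\Psi \in \text{Diff}_0(\mathbb{R}^d)$, a compact $\Omega$ and $\varepsilon>0$, there is an element $\Phi = \phi_{p_n}^{\tau_n}\circ\cdots\circ\phi_{p_1}^{\tau_1} \in \mathcal{H}(\mathcal{P})$, with $p_1,\dots,p_n$ polynomial vector fields, such that $\|\Phi-\Psi\|\le \varepsilon/2$ on $\Omega$. It therefore suffices to show that $\mathcal{H}(\mathcal{F})$ uniformly approximates every such composition $\Phi$ on $\Omega$, since a triangle inequality then closes the argument.

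The reduction from $\Phi$ to its factors uses Proposition~\ref{th:composition_approximation} with the approximating function class taken to be $\mathcal{H}(\mathcal{F})$ itself. This is admissible because $\mathcal{H}(\mathcal{F})$ is closed under composition: concatenating two finite compositions of flows of members of $\mathcal{F}$ yields another such composition, so $\mathcal{H}(\mathcal{F})\circ\mathcal{H}(\mathcal{F})\subseteq\mathcal{H}(\mathcal{F})$. Hence, once we know that $\mathcal{H}(\mathcal{F})$ can uniformly approximate each single flow $\phi_{p_i}^{\tau_i}$ on any compact set, Proposition~\ref{th:composition_approximation} produces an element of $\mathcal{H}(\mathcal{F})$ within $\varepsilon/2$ of $\Phi$ on $\Omega$, and we are done. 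The whole proposition thus collapses to one building-block claim: for every continuous $f$, every $\tau\ge 0$, every compact $\mathcal{K}$ and every $\delta>0$ there is $g\in\mathcal{F}$ with $\|\phi_g^{\tau}-\phi_f^{\tau}\|\le\delta$ on $\mathcal{K}$, whereupon $\phi_g^{\tau}\in\mathcal{H}(\mathcal{F})$ (the case $n=1$).

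For the building-block claim I would run the standard continuous-dependence-on-the-field estimate. The $\phi_f^{\cdot}$-trajectories issued from $\mathcal{K}$ remain, for $t\in[0,\tau]$, inside some larger compact set, which I enlarge to a compact neighbourhood $\mathcal{K}''$. By the density hypothesis choose $g\in\mathcal{F}$ with $\eta:=\sup_{\mathcal{K}''}\|g-f\|$ as small as desired, and let $L$ be a Lipschitz constant for $f$ on $\mathcal{K}''$. A Gronwall estimate gives $\|\phi_g^{t}(x)-\phi_f^{t}(x)\|\le \eta\,(e^{Lt}-1)/L$ for $x\in\mathcal{K}$ and $t\le\tau$, as long as the $g$-trajectory has not yet left $\mathcal{K}''$; a bootstrap in $\eta$ (for $\eta$ small the right-hand side stays below the safety margin between the $f$-trajectories and $\partial\mathcal{K}''$) shows the $g$-trajectory stays inside $\mathcal{K}''$ on all of $[0,\tau]$, so the bound is valid up to time $\tau$. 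Taking $\eta$ small enough makes it $\le\delta$.

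The main obstacle is precisely this bootstrap/confinement step: because $\mathcal{F}$ is only assumed dense (and its members only Lipschitz, so their flows exist globally but are a priori unrelated to $\phi_f$ outside $\mathcal{K}''$), one must argue carefully that the approximating trajectory cannot escape the region on which $g$ is close to $f$ before the terminal time. Everything else — the passage through \eqref{eq:flow_map_composition}, the closure of $\mathcal{H}(\mathcal{F})$ under composition, and the invocation of Proposition~\ref{th:composition_approximation} — is bookkeeping, which is exactly why the polynomial result of \cite{Agrachev2010Dynamics} transfers verbatim: its proof uses polynomials only through their density, and the building-block claim above shows that density is inherited at the level of flow maps by any Lipschitz control family dense in $C$.
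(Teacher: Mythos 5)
Your proof is correct, but it follows a genuinely different route from the paper's. The paper gives no constructive argument for Proposition~\ref{th:UAP_F_to_HF}: it simply observes that the theorem of \cite{Agrachev2010Dynamics}, proved there for the polynomial family $\mathcal{P}$, uses polynomials only through their density in $C(\mathcal{K},\mathbb{R}^d)$, and so that proof carries over verbatim with any dense $\mathcal{F}$ in place of $\mathcal{P}$. You instead treat that result as a black box --- obtain $\Phi=\phi_{p_n}^{\tau_n}\circ\cdots\circ\phi_{p_1}^{\tau_1}\in\mathcal{H}(\mathcal{P})$ close to the target --- and then build a second bridge from $\mathcal{H}(\mathcal{P})$ to $\mathcal{H}(\mathcal{F})$: each polynomial flow $\phi_{p_i}^{\tau_i}$ is approximated on a compact set by a single flow $\phi_g^{\tau_i}$ with $g\in\mathcal{F}$, via Gronwall plus the confinement bootstrap, and the factors are glued together using Proposition~\ref{th:composition_approximation} and the closure of $\mathcal{H}(\mathcal{F})$ under composition. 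The paper's route buys brevity at the cost of asking the reader to re-inspect the cited proof; your route is self-contained modulo the black-box citation and correctly isolates the one genuine technical point --- the confinement step, which is indeed needed here because polynomial fields are only locally Lipschitz, so the paper's Lemma~\ref{th:ODE_error_estimation} (stated for a globally Lipschitz field $v_1$) cannot be invoked as written and must be adapted in exactly the way you describe.
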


\subsection{Main results}

Now, we present some sufficient conditions for a Lipschitz continuous control family $\mathcal{F}$ such that $\mathcal{H}(\mathcal{F})$ follows the $C$-UAP for the OP diffeomorphisms of $\mathbb{R}^d$ ($\text{Diff}_0(\mathbb{R}^d)$) and hence obeys the $L^p$-UAP for continuous functions $C(\mathbb{R}^d,\mathbb{R}^d)$ when $d\ge2$.

First, we extend Proposition~\ref{th:UAP_F_to_HF} to a milder condition that ensures the UAP.
\begin{theorem}
    \label{th:UAP_span_F_to_HF}
    Let $\Omega \subset \mathbb{R}^d$ be a compact domain, and let $\mathcal{F}$ be a symmetric Lipschitz control family. If the span of $\mathcal{F}$, \emph{i.e.},
    \begin{align}
        \text{span}(\mathcal{F}) = \Big\{ \sum_{i=1}^m a_i f_i ~|~ a_i \in \mathbb{R}, f_i \in \mathcal{F}, m \in \mathbb{N} \Big \},
    \end{align}
    is dense in $C(\mathcal{K},\mathbb{R}^d)$ for every compact domain $\mathcal{K} \subset \mathbb{R}^d$, then $\mathcal{H}(\mathcal{F})$ has the $C(\Omega)$-UAP for $\text{Diff}_0(\mathbb{R}^d)$.
\end{theorem}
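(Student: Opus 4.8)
The plan is to reduce the statement to Proposition~\ref{th:UAP_F_to_HF} applied to the enlarged control family $\mathcal{G} := \text{span}(\mathcal{F})$, and then to show that every flow map in $\mathcal{H}(\mathcal{G})$ can be uniformly approximated, on any compact set, by flow maps in $\mathcal{H}(\mathcal{F})$. Since a finite linear combination of Lipschitz maps is again Lipschitz, $\mathcal{G}$ is a Lipschitz control family, and it is dense in $C(\mathcal{K},\mathbb{R}^d)$ for every compact $\mathcal{K}$ by hypothesis; hence Proposition~\ref{th:UAP_F_to_HF} already gives that $\mathcal{H}(\mathcal{G})$ has $C(\Omega)$-UAP for $\text{Diff}_0(\mathbb{R}^d)$. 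Thus it remains to prove that elements of $\mathcal{H}(\mathcal{G})$ are approximable on compacts by elements of $\mathcal{H}(\mathcal{F})$, after which a triangle inequality closes the argument.

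For this approximation, recall that $\mathcal{H}(\mathcal{F})$ is closed under composition, so by Proposition~\ref{th:composition_approximation} (applied with the approximating class taken to be $\mathcal{H}(\mathcal{F})$) it suffices to handle a single flow map $\phi_g^\tau$ with $g = \sum_{j=1}^k c_j f_j$, $f_j \in \mathcal{F}$, $c_j \in \mathbb{R}$. Two elementary operations are used. First, time rescaling together with symmetry: $\phi_{cf}^t = \phi_f^{ct}$ when $c \ge 0$ and $\phi_{cf}^t = \phi_{-f}^{|c|t}$ when $c<0$, so, using $-f \in \mathcal{F}$, in every case $\phi_{cf}^t \in \mathcal{H}(\mathcal{F})$ for all $t\ge0$. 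Second, operator splitting: for Lipschitz vector fields the Lie--Trotter product $\big(\phi_{c_k f_k}^{\tau/m} \circ \cdots \circ \phi_{c_1 f_1}^{\tau/m}\big)^m$ converges, as $m \to \infty$, to $\phi_g^\tau$ uniformly on compact sets; this is the additive analogue of the commutator approximation recalled in Section~\ref{sec:main}, and it follows from a Gronwall estimate for the per-step splitting error. Since each factor lies in $\mathcal{H}(\mathcal{F})$ by the first operation, so does the whole iterate, which yields the required approximation of $\phi_g^\tau$.

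To assemble, given $\Psi \in \text{Diff}_0(\mathbb{R}^d)$ and $\varepsilon>0$, first pick $T = \phi_{g_n}^{\tau_n} \circ \cdots \circ \phi_{g_1}^{\tau_1} \in \mathcal{H}(\mathcal{G})$ with $\|\Psi - T\|_{C(\Omega)} \le \varepsilon/2$. The successive images $\Omega,\ \phi_{g_1}^{\tau_1}(\Omega),\ \ldots$ lie in a nested family of compact sets (flows of Lipschitz fields over a finite time map compacts to compacts), so applying the previous paragraph on these sets together with Proposition~\ref{th:composition_approximation} produces $\tilde T \in \mathcal{H}(\mathcal{F})$ with $\|T - \tilde T\|_{C(\Omega)} \le \varepsilon/2$, whence $\|\Psi - \tilde T\|_{C(\Omega)} \le \varepsilon$.

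I expect the main obstacle to be the splitting step: one must establish the Lie--Trotter convergence, with error uniform on compact sets, for merely Lipschitz (non-smooth) vector fields — the relevant example being $\text{ReLU}$ — and one must control that the intermediate iterates remain within the compact neighbourhoods on which the approximations are valid before invoking Proposition~\ref{th:composition_approximation}. The remaining ingredients — rescaling, symmetry, and the composition lemma — are essentially bookkeeping.
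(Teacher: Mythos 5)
Your proposal follows the paper's route exactly: reduce via Proposition~\ref{th:UAP_F_to_HF} to $\text{span}(\mathcal{F})$, absorb signs through symmetry and time rescaling ($\phi_{cf}^{t}=\phi_{\pm f}^{|c|t}$), approximate $\phi_{\sum_j c_j f_j}^{\tau}$ by a Lie--Trotter product of flows in $\mathcal{H}(\mathcal{F})$, and assemble via Proposition~\ref{th:composition_approximation}. The one piece you flag as the main obstacle — uniform-on-compacts Lie--Trotter convergence for merely Lipschitz fields with control of the intermediate iterates — is precisely what the paper supplies as Lemma~\ref{th:flow_sum_f}, proved by the same Gronwall-type estimate used for the splitting Lemma~\ref{th:split_method}, so your plan is correct and essentially identical.
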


The proof of Theorem~\ref{th:UAP_span_F_to_HF} is given in Section~\ref{sec:proof_UAP_span_F_to_HF}.

Then, we apply Theorem~\ref{th:UAP_span_F_to_HF} to affine invariant control families. Consider a control family $\mathcal{F}_{\text{aff}}(f)$ that is affine invariant and contains a nonlinear Lipschitz continuous function $f$:
\begin{align*}
    \mathcal{F}_{\text{aff}}(f)
    =
    \{ x \mapsto D f(Ax+b) ~|~ D, A\in \mathbb{R}^{d\times d}, b \in \mathbb{R}^d\}.
\end{align*}
This $\mathcal{F}_{\text{aff}}(f)$ is the smallest affine invariant family containing $f \in C(\mathbb{R}^d,\mathbb{R}^d)$. Similarly, we denote $\mathcal{F}_{\text{diag}}(f)$ and $\mathcal{F}_{\text{rest}}(f)$ as the smallest diagonal and a restricted affine invariant family containing $f$, respectively. In addition, we introduce the following definition of nonlinearity.
\begin{definition}
    For a (vector-valued) continuous function $f:=(f_1,...,f_{d'}) \in C(\mathbb{R}^d,\mathbb{R}^{d'})$ from $\mathbb{R}^d$ to $\mathbb{R}^{d'}$ with components $f_i \in C(\mathbb{R}^d,\mathbb{R}), i= 1,...,d'$, we say that
    \begin{enumerate}
        \item $f$ is nonlinear if at least one component $f_i$ is nonlinear,
        \item $f_i$ is coordinate nonlinear if for any coordinate vector $e_j \in \mathbb{R}^d, j=1,...,d$, there is a vector $b_{ij} \in \mathbb{R}^d$ such that $f_i(\cdot e_j + b_{ij}) \in C(\mathbb{R},\mathbb{R})$ is nonlinear,
        \item and $f$ is fully coordinate nonlinear if all components $f_i$ are coordinate nonlinear.
    \end{enumerate}
\end{definition}

The result provided in \cite{Li2022Deep} shows that for a specific function \(f \), which is referred to as a \emph{well function}, the hypothesis space \(\mathcal{H}(\mathcal{F}_{\text{rest}}(f))\) generated by the restricted affine invariant family \(\mathcal{F}_{\text{rest}}(f)\) follows the \(L^p\)-UAP for continuous functions when \(d \geq 2\). In addition, \cite{Tabuada2022Universal} showed that for some $f$ satisfying a quadratic differential equation, the hypothesis space $\mathcal{H}(\mathcal{F}_{\text{aff}}(f))$ has the $C$-UAP for monotonical functions. In this paper, we extend these results by showing that allowing \(f \) to be (fully coordinate) nonlinear is sufficient for achieving the UAP. Specifically, the following theorem generalizes the results of \cite{Li2022Deep} and \cite{Tabuada2022Universal}.
\begin{theorem}
    \label{th:UAP_affine}
    Let $f \in C(\mathbb{R}^d,\mathbb{R}^d)$ be a nonlinear Lipschitz continuous function; then, $\mathcal{H}(\mathcal{F}_{\text{aff}}(f))$ has the $C$-UAP for $\mathrm{Diff}_0(\mathbb{R}^d)$. In addition, if $f$ is Lipschitz continuous and fully coordinate nonlinear, then both $\mathcal{H}(\mathcal{F}_{\text{diag}}(f))$ and $\mathcal{H}(\mathcal{F}_{\text{rest}}(f))$ has the $C$-UAP for $\mathrm{Diff}_0(\mathbb{R}^d)$.
\end{theorem}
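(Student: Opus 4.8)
The strategy is to reduce both assertions to Theorem~\ref{th:UAP_span_F_to_HF} (and, for the second, its smooth Chow--Rashevskii sharpening noted in the remark following it). The unifying observation is that a Lipschitz polynomial on $\mathbb{R}$ is affine; hence a Lipschitz function that is nonlinear along some line is non-polynomial along that line, which is exactly the hypothesis under which a one-hidden-layer network is a universal approximator \cite{Leshno1993Multilayer}.

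For the first claim, note that $\mathcal{F}_{\text{aff}}(f)$ is symmetric (replace $D$ by $-D$) and consists of Lipschitz maps (a linear image of a Lipschitz map pre-composed with an affine map is Lipschitz), so by Theorem~\ref{th:UAP_span_F_to_HF} it suffices to show that $\text{span}(\mathcal{F}_{\text{aff}}(f))$ is dense in $C(\mathcal{K},\mathbb{R}^d)$ for every compact $\mathcal{K}$. Since $f=(f_1,\dots,f_d)$ is nonlinear, some $f_k$ is nonlinear, and then some affine line $\ell(t)=vt+c$ has $g:=f_k\circ\ell$ non-affine --- otherwise $f_k((1-t)x+ty)=(1-t)f_k(x)+tf_k(y)$ for all $x,y,t$, which (with continuity of $f_k$) forces $f_k$ affine. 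As $g$ is also Lipschitz it is non-polynomial. Choosing the rank-one matrix $A=vw^{\top}$ and an appropriate $b$ makes the $k$-th component of $f(Ax+b)$ equal to $g(w\cdot x+b_0)$; taking $D=e_je_k^{\top}$ then places $e_j\,g(w\cdot x+b_0)$ in $\mathcal{F}_{\text{aff}}(f)$ for every $j$, every $w\in\mathbb{R}^d$ and every $b_0\in\mathbb{R}$. By \cite{Leshno1993Multilayer}, $\text{span}\{g(w\cdot x+b_0):w\in\mathbb{R}^d,\,b_0\in\mathbb{R}\}$ is dense in $C(\mathcal{K},\mathbb{R})$, so approximating the coordinates of a target separately yields the required density, and Theorem~\ref{th:UAP_span_F_to_HF} finishes the argument.

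For the diagonal and restricted cases the difficulty is that $\text{span}(\mathcal{F}_{\text{diag}}(f))$ is \emph{not} dense in $C(\mathcal{K},\mathbb{R}^d)$ --- e.g.\ if $f_i(x)=\sigma(x_1)+\cdots+\sigma(x_d)$ it contains only maps whose $i$-th component is a sum of single-variable functions --- so the composition structure of $\mathcal{H}$ must be used via an auxiliary family. The plan is: (i) for $j\ne m$, let the diagonal $A,D$ have their only nonzero entry in slots $m,j$; the field $\dot x=Df(Ax+b)$ then freezes every coordinate except $x_j$ while keeping $x_m$ fixed, so its flow is the \emph{exact} coupling shear $x\mapsto x+\tau\,\delta_j\,h(a_mx_m+b_m)\,e_j$, where $h$ is a restriction of $f_j$ to a line in direction $e_m$; by full nonlinearity the frozen arguments can be chosen so that $h$ is nonlinear, hence non-polynomial, and composing finitely many such shears --- which commute, as $x_m$ never moves --- produces exactly $x\mapsto x+Q(x_m)e_j$ for $Q$ in a span dense in $C(K_m,\mathbb{R})$ by \cite{Leshno1993Multilayer}; similarly the $j=m$ fields give one-dimensional flows on the $x_m$-axis whose compositions approximate the increasing diffeomorphisms of $\mathbb{R}$. (ii) Hence $\mathcal{H}(\mathcal{F}_{\text{diag}}(f))$ approximates, on every compactum, all flows of the smooth symmetric family $\mathcal{G}=\{x\mapsto e_j\psi(x_m):j\ne m,\ \psi\in C^\infty\}\cup\{x\mapsto e_m\chi(x_m):\chi\in C^\infty\}$. (iii) The bracket of two fields of $\mathcal{G}$ pointing in the same direction gives $[\,e_1\psi(x_2),\,e_1\chi(x_1)\,]=e_1\,\chi'(x_1)\psi(x_2)$, and iterating this product rule (always bracketing with a one-dimensional field) produces every monomial vector field $e_ix^{\alpha}$; thus $\text{Lie}(\mathcal{G})$ contains all polynomial vector fields and is dense in $C(\mathcal{K},\mathbb{R}^d)$, so the smooth sharpening of Theorem~\ref{th:UAP_span_F_to_HF} gives that $\mathcal{H}(\mathcal{G})$ has $C(\Omega)$-UAP for $\mathrm{Diff}_0(\mathbb{R}^d)$. (iv) Since $\mathcal{H}(\mathcal{F}_{\text{diag}}(f))$ approximates each generating flow of $\mathcal{G}$ on compacta and is closed under composition, Proposition~\ref{th:composition_approximation} transfers this UAP to $\mathcal{H}(\mathcal{F}_{\text{diag}}(f))$. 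For $\mathcal{F}_{\text{rest}}(f)$ the same construction applies verbatim --- in (i) the chosen $D$ already has entries in $\{0,\pm1\}$ --- the only extra point being that the slopes $a_m$ are restricted to $|a_m|\le1$; that $\text{span}\{h(a\,x_m+b):|a|\le1,\ b\in\mathbb{R}\}$ is still dense in $C(K_m,\mathbb{R})$ for non-polynomial Lipschitz $h$ is the mechanism already used in \cite{Li2022Deep}, which we adapt.

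The first claim is essentially a repackaging of \cite{Leshno1993Multilayer} once one notices that nonlinearity together with the Lipschitz hypothesis excludes polynomials, so I expect the real work to lie in the second half. The conceptual obstacle there is that no control family we can directly realize has a dense span; the resolution is the observation that switching off all but one coordinate of a diagonal field produces an \emph{exact} coupling shear, after which the non-separable behaviour needed for universality must be synthesised purely through Lie brackets. The bookkeeping in steps (iii)--(iv) --- verifying the bracket identities in arbitrary dimension $d\ge2$, keeping the auxiliary fields smooth, and checking that the $|a|\le1$ constraint of $\mathcal{F}_{\text{rest}}(f)$ does not spoil the one-dimensional density --- is where the technical care is needed.
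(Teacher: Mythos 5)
Your proof of the first assertion is essentially the paper's: reduce via Theorem~\ref{th:UAP_span_F_to_HF} to density of $\mathrm{span}(\mathcal{F}_{\mathrm{aff}}(f))$, pick a nonlinear component, inject and project along rank-one matrices to recover one-hidden-layer terms, and invoke Leshno-type density (with the correct observation that Lipschitz plus nonlinear rules out polynomials --- the paper states this more implicitly through its Lemma~\ref{th:nonlinear_function}).

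The second assertion is where you depart, and the route is genuinely different. The paper first proves Theorem~\ref{th:UAP_ReLU_family} by its flow--discretization--flow scheme, then derives the diagonal case by showing $\mathcal{F}_1\subset\overline{\mathrm{span}(\mathcal{F}_{\mathrm{diag}}(f))}$, and the restricted case by realizing the scaling flow $\phi^{\tau}_{\pm E_{jj}\cdot}$ from $\mathcal{F}_{\mathrm{rest}}(f)$ via a zoom-in at a point of differentiability and conjugating. You instead pass through an auxiliary smooth family $\mathcal{G}$ of axis-aligned shears $e_j\psi(x_m)$ ($j\neq m$, exact flows) and one-dimensional fields $e_m\chi(x_m)$, realize their flows from $\mathcal{H}(\mathcal{F}_{\mathrm{diag}}(f))$, and argue that $\mathrm{Lie}(\mathcal{G})$ contains all polynomial vector fields so that a Lie--bracket strengthening of Theorem~\ref{th:UAP_span_F_to_HF} applies. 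The observation that the $j\ne m$ shears are exact mirrors the role of Lemma~\ref{th:T_as_flowmap}, but replacing the Theorem~\ref{th:UAP_ReLU_family} machinery by bracket generation is a real design change; it avoids the splitting discretization entirely. The price is that the only warrant the paper offers for the Lie--bracket version of Theorem~\ref{th:UAP_span_F_to_HF} is the one-sentence remark following it, which is never proved; a complete write-up of your route would still have to supply that lemma (iterated Trotter/BCH estimates showing flows of brackets, and of their linear combinations, lie in $\overline{\mathcal{H}(\mathcal{G})}$ uniformly on compacta). This is a substantive dependency rather than mere bookkeeping. One smaller inaccuracy: the density of $\mathrm{span}\{h(a x_m+b):|a|\le1,\ b\in\mathbb{R}\}$ for non-polynomial Lipschitz $h$ is in fact true, but it is not the mechanism of \cite{Li2022Deep} (that paper works with well functions); it follows from the locality near $a=0$ of the classical Leshno argument after mollification, and deserves a self-contained line of justification since the paper's own $\mathcal{F}_{\mathrm{rest}}$ treatment deliberately avoids relying on bounded-slope density.
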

This theorem indicates that the expressive power differences among $\mathcal{F}_{\text{aff}}(f)$, $\mathcal{F}_{\text{diag}}(f)$ and $\mathcal{F}_{\text{rest}}(f)$ are subtle. The first part of Theorem~\ref{th:UAP_affine} is implied from Theorem~\ref{th:UAP_span_F_to_HF} and the classic UAP for one-hidden-layer neural networks. The proof of this part is given in Section~\ref{sec:proof_UAP_affine}. However, the second part involves the construction of many techniques based on Theorem~\ref{th:UAP_ReLU_family}, which are described later. The proof of this part is given in Section~\ref{sec:proof_UAP_diag_affine}.

Finally, we provide a surprisingly simple control family that has the UAP. Let us start with the linear 
\footnote{
	We do not strictly differentiate between the terms ``linear'' and ``affine'', as they are often used interchangeably in machine learning. For instance, a linear layer in neural networks is actually an affine map. Additionally, a nonlinear map essentially refers to a non-affine map.
} 
control family $\mathcal{F}_0$, which contains only affine maps, \emph{i.e.,}  
\begin{align}
    \mathcal{F}_0 = \{ x \mapsto Ax+b ~|~ A\in \mathbb{R}^{d\times d}, b \in \mathbb{R}^d\}.
\end{align}
Notably, $\mathcal{F}_0$ is affine invariant, and $\mathcal{H}(\mathcal{F}_0)$ only contains linear OP diffeomorphisms, which implies that $\mathcal{H}(\mathcal{F}_0)$ does not have the UAP for nonlinear functions. However, adding one nonlinear function $\text{ReLU}(\cdot)$ to $\mathcal{F}_0$ significantly improves the resulting approximation power. Consider the following control family $\mathcal{F}_1$ and its symmetric version $\mathcal{F}_2$:
\begin{align}
    \mathcal{F}_1 = \mathcal{F}_0 \cup \{ \text{ReLU}(\cdot)\}, \quad
    \mathcal{F}_2 = \mathcal{F}_0 \cup \{ \pm \text{ReLU}(\cdot) \}.
\end{align}
The UAP of these families is stated in Theorem~\ref{th:UAP_ReLU_family}.
\begin{theorem}
    \label{th:UAP_ReLU_family}
    $\mathcal{H}(\mathcal{F}_2)$ has the $C$-UAP for $\mathrm{Diff}_0(\mathbb{R}^d)$. $\mathcal{H}(\mathcal{F}_1)$ has the $C$-UAP for $\mathrm{Diff}_0(\mathbb{R}^d)$ if and only if $d\ge 2$.
\end{theorem}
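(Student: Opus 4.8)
The plan is to realize, uniformly on any compact set, the flow of an arbitrary ReLU network as a composition of flow maps drawn from $\mathcal F_1$ (resp.\ $\mathcal F_2$), and then apply Theorem~\ref{th:UAP_span_F_to_HF}. Let $\mathcal N$ be the control family of all affine maps together with all ReLU networks $x\mapsto\sum_j s_j\sigma_0(w_j\cdot x+c_j)$, where $\sigma_0=\mathrm{ReLU}(\cdot)$. Then $\mathcal N$ is symmetric and Lipschitz, and by the classical universal approximation theorem for one-hidden-layer ReLU networks $\mathrm{span}(\mathcal N)$ is dense in $C(\mathcal K,\mathbb R^d)$ for every compact $\mathcal K$; hence Theorem~\ref{th:UAP_span_F_to_HF} gives that $\mathcal H(\mathcal N)$ has $C(\Omega)$-UAP for $\mathrm{Diff}_0(\mathbb R^d)$. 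It therefore suffices to show $\mathcal H(\mathcal N)\subseteq\overline{\mathcal H(\mathcal F_j)}$ (closure under uniform convergence on compacta). Since $\overline{\mathcal H(\mathcal F_j)}$ is closed under composition, Proposition~\ref{th:composition_approximation} propagates approximation through compositions, and the first-order (Lie--Trotter) splitting $\phi^t_{v_1+\cdots+v_k}=\lim_{n\to\infty}(\phi^{t/n}_{v_1}\circ\cdots\circ\phi^{t/n}_{v_k})^n$ then reduces the whole task to one statement: \emph{for every $s,w\in\mathbb R^d$, $c\in\mathbb R$ and $t\ge0$, the flow $\phi^t_{s\sigma_0(w\cdot x+c)}$ is approximable on compacta by compositions of flows from $\mathcal F_j$.}

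Two ingredients do the work. (i)~$\mathcal H(\mathcal F_0)$ is exactly the group of orientation-preserving affine maps of $\mathbb R^d$. (ii)~\emph{Isolation of a neuron}: given a compact set and $s,w$ with $\rho:=s\cdot w\neq0$, choose $A\in GL^+(d)$ with first column $s/\rho$ and with $A^{\top}w=e_1$ (the remaining columns may be taken as a signed basis of $w^\perp$ making $\det A>0$, provided $d\ge2$; the case $d=1$ is treated separately), and set $c'=(c,-N,\dots,-N)$ with $N$ large; then on that compact set the field $A\,\mathrm{ReLU}(A^{-1}x+c')$ agrees with $\rho^{-1}s\,\sigma_0(w\cdot x+c)$, since the large shifts deactivate all ReLU coordinates but the first. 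Conjugating the flow of $\mathrm{ReLU}$ by the affine map $x\mapsto A(x-c')\in\mathcal H(\mathcal F_0)$ and rescaling time therefore realizes $\phi^t_{s\sigma_0(w\cdot x+c)}$ on that set when $\rho>0$; conjugating the flow of $-\mathrm{ReLU}$ does so when $\rho<0$. For $\rho=0$, $w\neq0$, write $s=\tfrac12(s+w)+\tfrac12(s-w)$, so that $s\sigma_0(w\cdot x+c)$ is a sum of two neurons whose output vectors have nonzero inner product with $w$, and apply Lie--Trotter; for $w=0$ the field is constant. Since $\mathcal F_2$ contains both $\mathrm{ReLU}$ and $-\mathrm{ReLU}$, this yields $\mathcal H(\mathcal N)\subseteq\overline{\mathcal H(\mathcal F_2)}$ in every dimension, hence the $\mathcal F_2$ assertion. (In $d=1$ a negative-weight neuron cannot be isolated by an orientation-preserving affine change of variables, so one instead uses $\sigma_0(-x+c)=(c-x)+\sigma_0(x-c)$ to reduce it to a positive-weight neuron plus an affine field.)

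For $\mathcal F_1$ with $d\ge2$, the only element of $\mathcal F_2$ not manifestly available is the backward flow $\phi^t_{-\mathrm{ReLU}}$, and this is where $d\ge2$ is used. For each coordinate $i$ pick some $j\neq i$ and let $R_{ij}\in GL^+(d)$ act as $-I$ on $\mathrm{span}(e_i,e_j)$ and as the identity elsewhere (so $\det R_{ij}=1$); conjugating the isolated single-coordinate flow $\phi^t_{\sigma_0(x_i)e_i}$ by $R_{ij}$ produces $\phi^t_{\min(x_i,0)e_i}$. Composing these over $i=1,\dots,d$ (the fields commute, acting on disjoint coordinates) gives $\phi^t_{m}$ with $m(x)=(\min(x_1,0),\dots,\min(x_d,0))$, and one checks componentwise the identity
\[
\phi^t_{-\mathrm{ReLU}}=\big(e^{-t}I\big)\circ\phi^t_{m}
\]
(on $\{x_i<0\}$ the scaling cancels the expansion; on $\{x_i\ge0\}$ only the scaling acts). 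As $e^{-t}I\in\mathcal H(\mathcal F_0)$, this places $\phi^t_{-\mathrm{ReLU}}$ in $\overline{\mathcal H(\mathcal F_1)}$, whence $\mathcal H(\mathcal F_2)\subseteq\overline{\mathcal H(\mathcal F_1)}$ and $\mathcal H(\mathcal F_1)$ inherits $C(\Omega)$-UAP for $\mathrm{Diff}_0(\mathbb R^d)$ when $d\ge2$.

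For the converse I would show that $\mathcal H(\mathcal F_1)$ fails $C(\Omega)$-UAP for $\mathrm{Diff}_0(\mathbb R)$ when $d=1$ by a convexity argument: every affine flow $x\mapsto e^{at}x+\beta$ is increasing and convex, the $\mathrm{ReLU}$ flow ($x\mapsto x$ for $x\le0$, $x\mapsto e^tx$ for $x>0$, with $t\ge0$) is increasing with nondecreasing slope hence convex, and the composition of increasing convex functions is increasing and convex; thus every element of $\mathcal H(\mathcal F_1)$, and every uniform limit thereof, is convex, whereas $\mathrm{Diff}_0(\mathbb R)$ contains non-convex increasing diffeomorphisms such as $x\mapsto 2x+\arctan x$, which cannot be uniformly approximated by convex functions on, e.g., $[-1,1]$. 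I expect the $\mathcal F_1$, $d\ge2$ direction to be the main difficulty: the isolation construction is exact only on a prescribed compact set, so all the identities above hold only after the compact domain and the deactivating constants are fixed, and some care (using that every map involved is uniformly Lipschitz on the relevant sets, so that composition is continuous for uniform-on-compacta convergence) is needed to conclude from this local agreement that $\mathcal H(\mathcal F_2)\subseteq\overline{\mathcal H(\mathcal F_1)}$.
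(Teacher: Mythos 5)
Your proof is essentially correct and takes a genuinely different route from the paper's. The paper proceeds by a \emph{flow--discretization--flow} scheme: it first approximates the target diffeomorphism by the flow of a leaky-ReLU neural ODE (via Proposition~\ref{th:UAP_F_to_HF}), then discretizes that flow with a coordinate-wise forward-Euler split method (Lemma~\ref{th:split_method}) into explicit two-piecewise-linear maps $T_k^{(i,j)}$, and finally shows each such Euler step is itself a flow in $\mathcal{H}(\mathcal{F}_1)$ by a six-factor decomposition into linear and leaky-ReLU maps (Lemma~\ref{th:T_as_flowmap}, built on Lemma~\ref{th:ReLU_flow}). You instead stay at the level of flows throughout: you invoke Theorem~\ref{th:UAP_span_F_to_HF} on the symmetric family $\mathcal{N}$ of ReLU networks, reduce via the Lie--Trotter product formula to single-neuron flows $\phi^t_{s\sigma_0(w\cdot x+c)}$, and realize each such flow exactly (on a prescribed compact set) as an orientation-preserving affine conjugate of $\phi^{\rho t}_{\mathrm{ReLU}}$ by choosing the affine change of variables to send $w$ to $e_1$ and $s/\rho$ to the image of $e_1$, and by adding large negative shifts in the remaining coordinates so that all but one ReLU component are deactivated. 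Your treatment of $\phi^t_{-\mathrm{ReLU}}$ in $\mathcal{H}(\mathcal{F}_1)$ via the orientation-preserving reflection $R_{ij}$ and the identity $\phi^t_{-\mathrm{ReLU}} = (e^{-t}I)\circ\phi^t_m$ is a neat replacement for the paper's use of $\sigma_\alpha$ with $\alpha>1$ in Lemma~\ref{th:ReLU_flow}, and both $d=1$ arguments (convexity for $\mathcal{F}_1$; reduction of negative-weight neurons to positive-weight ones plus affine for $\mathcal{F}_2$) are sound, though the paper handles the $\mathcal{F}_2$, $d=1$ case by quoting a width-one leaky-ReLU network result instead. What your route buys is a more conceptual construction (a single neuron's flow is literally a conjugated ReLU flow, no Euler error to track at the final stage); what the paper's route buys is that the target map is decomposed into \emph{finitely many exact} building blocks, avoiding the iterated limit (Lie--Trotter inside Theorem~\ref{th:UAP_span_F_to_HF} and again in your reduction) and the care you rightly flag about keeping the deactivating shifts uniform over the enlarged compact sets on which the field agreement must hold.
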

Theorem~\ref{th:UAP_ReLU_family} is nontrivial, as $\text{span}(\mathcal{F}_1)=\text{span}(\mathcal{F}_2)$ is rather simple and far from the condition stated in Theorem~\ref{th:UAP_span_F_to_HF} (as $\text{span}(\mathcal{F}_1)=\text{span}(\mathcal{F}_2)$ is a finite-dimensional space and not dense in $C(\mathcal{K},\mathbb{R}^d)$ for any compact domain $\mathcal{K}\subset \mathbb{R}^d$). We provide a proof of this theorem in Section~\ref{sec:proof_UAP_ReLU_family}; this proof is based on Proposition~\ref{th:UAP_F_to_HF} and motivated by recent work on the relationships between dynamical systems and feedforward neural networks \cite{Duan2022Vanilla}. To approximate an OP diffeomorphism $\Phi \in \mathrm{Diff}_0(\mathbb{R}^d)$, our construction process, which we call a \emph{flow-discretization-flow} approach, contains three steps. 
(i) $\Phi$ is approximated by a flow map $\phi^\tau_h$, where $h(x; s,w,b)$ is the leaky rectified linear unit (leaky-ReLU) network in Eq.~\eqref{eq:ReLU-NN} whose parameters $s(t), w(t), b(t)$ are piecewise constant in $t$. 
(ii) $\phi^\tau_h$ is approximated by discretizing it with a proper splitting method. 
(iii) It is verified that the discretized mappings are flow maps in $\mathcal{H}(\mathcal{F}_1)$.

In summary, three sufficient conditions for the UAP are outlined.
In particular, the condition in Theorem \ref{th:UAP_ReLU_family} states that a control family needs only a single nonlinear function along with affine maps. As an implication, if one wants to verify that a control family $\mathcal{F}$ possesses the UAP, one only needs to check whether $\text{span}(\mathcal{F})$ can approximate linear maps and the ReLU function. For instance, one can verify that Theorem~\ref{th:UAP_ReLU_family} holds if we replace the $\text{ReLU}$ function in $\mathcal{F}_1$ and $\mathcal{F}_2$ with a broad class of functions, such as the softplus function $\tilde \sigma_0(x) = \ln(1+\text{e}^x)$ and the map $\hat \sigma_0 : (x_1,...,x_d) \mapsto \text{ReLU}(x_d) e_d$, which keeps only one ReLU component nonzero.

\section{Theorem proofs}
\label{sec:proof}

\subsection{Candidate flow maps}
In this paper, we focus on the flow map $\phi_v^t$, where $t \ge 0$ and $v(\cdot,t)$ is Lipschitz continuous on $\mathbb{R}^d$. Although the flow map $\phi_v^t$ is defined on the whole space $\mathbb{R}^d$, we mainly consider its restriction to certain compact domains. Table~\ref{tab:flowmaps} lists some examples of flow maps. The flow maps for $v = \text{ReLU}=\sigma_0$ are related to the leaky-ReLU functions denoted by $\sigma_\alpha, \alpha \ge 0$:
\begin{align}
    \label{eq:leaky-ReLU}
    &\sigma_\alpha(x) = \sigma_\alpha(x_1,...,x_d) = 
    \big(
        \sigma_\alpha(x_1),...,\sigma_\alpha(x_d)
    \big),\\
    &\sigma_\alpha(x_i) = 
    \begin{cases}
        x_i, & x_i \ge 0, \\
        \alpha x_i , & x_i <0.
    \end{cases} 
\end{align}
In addition, let $A\in \mathbb{R}^{d\times d}$ be nonsingular and $b \in \mathbb{R}^{d}$; then, it is easy to verify the following relations:
\begin{align}
    &\phi^t_{A^{-1} \circ v \circ A} = A^{-1} \circ \phi^t_v \circ A, 
    \\
    &\phi^t_{v(\cdot +b)} = (\cdot - b) \circ \phi^t_{v(\cdot)} \circ (\cdot + b).
\end{align}
\begin{table}[htbp]
    \caption{Examples of flow maps.}
    \centering
\begin{tabular}{|c|c|}
    \hline 
    Dynamical system & Flow map \\
    \hline
    $\dot x(t) = Ax(t),\quad A \in \mathbb{R}^{d\times d}$ & $\phi_{A \cdot}^t: x \mapsto e^{At} x$ \\
    \hline
    $\dot x(t) = b, \quad b \in \mathbb{R}^d$ & $\phi_b^t: x \mapsto x + bt$ \\
    \hline
    $\dot x(t) = \sigma_0(x(t))$ & $\phi_{\sigma_0(\cdot)}^t : x \mapsto e^t \sigma_{e^{-t}}(x)$ \\
    \hline
    $\dot x(t) = -\sigma_0(x(t))$ & $\phi_{-\sigma_0(\cdot)}^t : x \mapsto e^{-t} \sigma_{e^t}(x)$ \\
    \hline
\end{tabular}
\label{tab:flowmaps}
\end{table}

Next, we characterize the flow maps in the hypothesis spaces $\mathcal{H}(\mathcal{F}_0)$, $\mathcal{H}(\mathcal{F}_1)$ and $\mathcal{H}(\mathcal{F}_2)$.
$\mathcal{H}(\mathcal{F}_0)$ is simple, whereas $\mathcal{H}(\mathcal{F}_1)$ and $\mathcal{H}(\mathcal{F}_2)$ are complex. In the following subsections, we consider three typical types of maps that are useful for our later constructions.

\subsubsection{Linear flow maps}

Here, we characterize a flow map $\phi_ f^\tau$ with a linear vector field $f \in \mathcal{F}_0$ and map it in $\mathcal{H}(\mathcal{F}_0)$. Table~\ref{tab:flowmaps} shows that if $e^A=:W \in \mathbb{R}^{d\times d}$ is nonsingular and its matrix logarithm $\ln(W)=A$ is real, then the linear map $W: x \mapsto Wx$ is the linear flow map $\phi^1_{A \cdot}$ in $\mathcal{H}(\mathcal{F}_0)$. Remark~\ref{th:real_log_matrix} below gives some examples of such a matrix $W$.
\begin{remark}
    \label{th:real_log_matrix}
    The following versions of matrix $W$ have real logarithms:
    \begin{enumerate}
        \item[(1)] $W = P \tilde W P^{-1}$, where $\tilde W$ has a real logarithm and $P$ is nonsingular,
        \item[(2)] $W = \Lambda:= \text{diag}(\lambda_1,...,\lambda_d)$ is diagonal with $\lambda_i >0$,
        \item[(3)] $W = \Lambda_{\pm} := \text{diag}(\lambda_1,...,\lambda_d)$ is diagonal with $\lambda_i=\pm1$ and has a positive determinant $\det(\Lambda_{\pm})>0$,
        \item[(4)] and $W = U := I + \lambda E_{ij}$ with $\lambda \in \mathbb{R}$ and $i\neq j$.
    \end{enumerate}
\end{remark}
Points (1), (2) and (4) are obvious since $\ln(P \tilde W P^{-1}) = P \ln(\tilde W) P^{-1}$, $\ln(\Lambda) = \text{diag}(\ln(\lambda_1),...,\ln(\lambda_d))$, and $\ln(U) = U - I$. Point (3) is implied by the simplest case in which $d=2$:
\begin{align}
    \ln \Lambda_2
        :=
    \ln
        \left(
    \begin{matrix} 
        -1 & 0\\ 
        0 & -1
    \end{matrix}
    \right)
    =
    \left(
    \begin{matrix} 
        0 & \pi\\ 
        -\pi & 0
    \end{matrix}
    \right)
    =:A_2.
\end{align}
For the case in which $d>2$, there must be an even number of $\lambda_i=-1$ (otherwise $\det(\Lambda_{\pm})<0$), and hence, there exists a permutation matrix $P$ such that $P \Lambda_{\pm} P^{-1}$ is a block diagonal matrix with $\Lambda_2$ blocks and ones. Then, $\ln(\Lambda_{\pm})$ is real, as is $\ln(P \Lambda_{\pm} P^{-1})$.
\begin{lemma}
    \label{th:represent_HF0}
    Let $W \in \mathbb{R}^{d\times d}$ have a positive determinant and $b\in\mathbb{R}^d$; then, the map $\Phi: x \mapsto Wx +b$ belongs to $\mathcal{H}(\mathcal{F}_0)$.
\end{lemma}
\begin{proof}
    Since $W$ is nonsingular, linear algebra shows that there exists a sequence of elementary matrices $U_k,k=1,...,n$ of the form $U_k = I + \lambda_k E_{i_k,j_k}, \lambda_k \in \mathbb{R}, i_k \neq j_k$, such that $M = U_n \cdots U_2 U_1 W$ is diagonal. In addition, $\det(M)=\det(W) >0$, and $M$ can be represented as a multiplication of two diagonal matrices $M = \Lambda \Lambda_{\pm}$, where $\Lambda$ has positive diagonal entries and $\Lambda_{\pm}$ has $\pm 1$ diagonal entries. According to Remark~\ref{th:real_log_matrix}, the map $\Psi: x \mapsto Wx$ belongs to $\mathcal{H}(\mathcal{F}_0)$. In addition, $\Phi = \phi^1_b \circ \Psi$ also belongs to $\mathcal{H}(\mathcal{F}_0)$.
\end{proof}

As a consequence, we can characterize $\mathcal{H}(\mathcal{F}_0)$ as
\begin{align*}
\mathcal{H}(\mathcal{F}_0) = \big\{x \mapsto Wx+b ~|~ W \in \mathbb{R}^{d\times d},\det(W)>0, b\in\mathbb{R}^d \big\}.
\end{align*}

\subsubsection{ReLU flow maps}
To characterize the maps contained in $\mathcal{H}(\mathcal{F}_1)$ and $\mathcal{H}(\mathcal{F}_2)$, we generalize the leaky-ReLU functions described in Eq.~\eqref{eq:leaky-ReLU} to the vector parameter case. Letting $\vec{\boldsymbol \alpha} = (\alpha_1,...,\alpha_d) \in \mathbb{R}^d$, we define a map $\sigma_{\boldsymbol{\alpha}} : \mathbb{R}^d \to \mathbb{R}^d$ with parameters $\vec{\boldsymbol \alpha}$ as
\begin{align}
    \sigma_{\vec{\boldsymbol \alpha}}(x) = 
    \sigma_{\vec{\boldsymbol \alpha}}(x_1,...,x_d) = 
    \big(\sigma_{\alpha_1}(x_1),..., \sigma_{\alpha_d}(x_d) \big).
\end{align}
If all $\alpha_i$ are equal to positive numbers $\alpha$, then $\sigma_{\vec{\boldsymbol \alpha}} = \sigma_{\alpha}$. Table~\ref{tab:flowmaps} shows that the leaky-ReLU function $\sigma_{\alpha} = \phi^t_{-I \cdot} \circ \phi^t_{\sigma_0}$ (here, $I \cdot$ refers to the identity map), with $t=-\ln(\alpha)$, belonging to $\mathcal{H}(\mathcal{F}_1)$ if $\alpha \in (0,1)$ and $\sigma_{\alpha} \in \mathcal{H}(\mathcal{F}_2)$ if $\alpha > 0$. For the vector parameter case, we have the following lemma.
\begin{lemma}
    \label{th:ReLU_flow}
    Let $d\ge 2$ and $\vec{\boldsymbol \alpha} = (\alpha_1,\cdots,\alpha_d) \in (0,\infty)^d$; then, for any compact domain $\mathcal{K} \subset \mathbb{R}^d$, there is a map $\Phi \in \mathcal{H}(\mathcal{F}_1)$ such that $\sigma_{\vec{\boldsymbol \alpha}} = \Phi$ on $\mathcal{K}$. 
\end{lemma}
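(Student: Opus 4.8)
The plan is to realize $\sigma_{\boldsymbol{\alpha}}$, restricted to $\mathcal{K}$, as a finite composition of flow maps of fields in $\mathcal{F}_1$, treating one coordinate at a time. First I would write $\sigma_{\boldsymbol{\alpha}} = Q_d\circ\cdots\circ Q_1$ with $Q_i(x) = (x_1,\dots,x_{i-1},\sigma_{\alpha_i}(x_i),x_{i+1},\dots,x_d)$, and set $\mathcal{K}_1=\mathcal{K}$, $\mathcal{K}_{i+1}=Q_i(\mathcal{K}_i)$, all compact. Since exact identities compose, it then suffices to produce, for each $i$, a map $\tilde Q_i\in\mathcal{H}(\mathcal{F}_1)$ with $\tilde Q_i = Q_i$ on $\mathcal{K}_i$: by induction $\tilde Q_k\circ\cdots\circ\tilde Q_1 = Q_k\circ\cdots\circ Q_1$ on $\mathcal{K}$ (the intermediate images are exactly the $\mathcal{K}_k$), so $\tilde Q_d\circ\cdots\circ\tilde Q_1\in\mathcal{H}(\mathcal{F}_1)$ coincides with $\sigma_{\boldsymbol{\alpha}}$ on $\mathcal{K}$.

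To build $\tilde Q_i$ I would exploit that $\sigma_0$ acts coordinatewise and that $\phi^t_{\sigma_0}$ preserves the sign of each coordinate (Table~\ref{tab:flowmaps}): it multiplies a nonnegative coordinate by $e^t$ and turns a coordinate $x_i$ of arbitrary sign into $e^t\sigma_{e^{-t}}(x_i)$. The key device for isolating coordinate $i$, given that $\mathcal{F}_1$ only contains the \emph{full} elementwise ReLU, is to translate the spectator coordinates into the orthant where ReLU and its flow are harmless: since $\mathcal{K}_i$ is compact, pick $b = \sum_{j\neq i}c_j e_j$ (so $b_i=0$) with $c_j$ large enough that $x_j+c_j\geq0$ on $\mathcal{K}_i$; the translation $x\mapsto x+b$ lies in $\mathcal{H}(\mathcal{F}_0)$. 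For $\alpha_i=1$ take $\tilde Q_i=\mathrm{id}$; for $\alpha_i\in(0,1)$, with $\tau=-\ln\alpha_i>0$, set
\[
  \tilde Q_i = \bigl(w\mapsto e^{-\tau}w-b\bigr)\circ\phi^\tau_{\sigma_0}\circ\bigl(x\mapsto x+b\bigr).
\]
A direct check gives $\tilde Q_i=Q_i$ on $\mathcal{K}_i$: the flow yields $e^\tau\sigma_{e^{-\tau}}(x_i)$ in coordinate $i$ and $e^\tau(x_j+c_j)$ in coordinate $j\neq i$, and the final affine map — whose linear part $e^{-\tau}I$ has positive determinant, hence it is in $\mathcal{H}(\mathcal{F}_0)$ by Lemma~\ref{th:represent_HF0} — rescales these to $\sigma_{\alpha_i}(x_i)$ and $x_j$ respectively.

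The case $\alpha_i>1$ is the hard part, and it is exactly where $d\geq2$ enters: a leaky-ReLU with slope larger than $1$ is informally a backward-time flow of $\sigma_0$, which is unavailable since $\mathcal{H}(\cdot)$ allows only nonnegative times and $\mathcal{F}_1$ contains $+\mathrm{ReLU}$ but not $-\mathrm{ReLU}$. The remedy is to conjugate the flow with a coordinate sign flip, converting ``expand the positive part'' into ``expand the negative part''. A single flip has determinant $-1$ and is not in $\mathcal{H}(\mathcal{F}_0)$, so I would flip coordinate $i$ together with an auxiliary coordinate $i'\neq i$ — which exists precisely because $d\geq2$; then $R=\diag(r_1,\dots,r_d)$ with $r_i=r_{i'}=-1$ and $r_k=1$ otherwise satisfies $\det R=1$, hence $R\in\mathcal{H}(\mathcal{F}_0)$ by Lemma~\ref{th:represent_HF0}. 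With $\tau=\ln\alpha_i$ and $b$ chosen (again $b_i=0$) so that coordinate $i'$ and every coordinate other than $i,i'$ of $x+b$ is nonnegative on $\mathcal{K}_i$, set
\[
  \tilde Q_i = \bigl(w\mapsto Dw-b\bigr)\circ R\circ\phi^\tau_{\sigma_0}\circ R\circ\bigl(x\mapsto x+b\bigr),
\]
where $D$ is diagonal, equal to $1$ in positions $i,i'$ and $e^{-\tau}$ elsewhere, so $\det D=e^{-(d-2)\tau}>0$ and $w\mapsto Dw-b\in\mathcal{H}(\mathcal{F}_0)$. Tracing through: the first $R$ sends coordinate $i$ to $-x_i$ and coordinate $i'$ to a nonpositive value while keeping the other coordinates nonnegative; $\phi^\tau_{\sigma_0}$ then expands coordinate $i$ to $-\sigma_{e^\tau}(x_i)$, fixes coordinate $i'$, and multiplies the rest by $e^\tau$; the second $R$ restores signs; and $w\mapsto Dw-b$ undoes the rescaling and the shift. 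This gives $\tilde Q_i=Q_i$ on $\mathcal{K}_i$, completing the construction.

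I expect the $\alpha_i>1$ step to be the real obstacle: one must imitate a negative-time ReLU flow using only nonnegative times, and do so while keeping every linear map employed inside $\mathcal{H}(\mathcal{F}_0)$ (i.e.\ with positive determinant), which is what forces a \emph{pair} of flipped coordinates and hence the hypothesis $d\geq2$ (and the construction genuinely breaks down for $d=1$, consistent with Theorem~\ref{th:UAP_ReLU_family}). The coordinatewise splitting and the ``shift the spectators into the positive orthant'' trick are then routine bookkeeping, as is verifying that all intermediate maps are genuine flow maps of fields in $\mathcal{F}_1$.
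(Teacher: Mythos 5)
Your proof is correct and takes essentially the same route as the paper: a coordinate-by-coordinate decomposition of $\sigma_{\boldsymbol\alpha}$, using translations to push the spectator coordinates into the positive orthant where $\phi^\tau_{\sigma_0}$ acts linearly, and, for $\alpha_i>1$, conjugating by a double sign flip whose positive determinant requires $d\ge2$. The minor differences are cosmetic: the paper reduces to $d=2$ and invokes its case $(0,1)$ recursively inside the case $\alpha_i>1$, whereas you handle general $d$ explicitly, work directly with the flow $\phi^\tau_{\sigma_0}$ throughout, and fold the final rescaling $\diag(\alpha_1,1)$ of the paper into a single diagonal matrix $D$.
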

\begin{proof}
    Let $\vec{\boldsymbol \alpha}_1 = (\alpha_1,1,...,1)$, ..., $\vec{\boldsymbol \alpha}_i$, ..., $\vec{\boldsymbol \alpha}_d = (1,...,1,\alpha_d)$ be $d$ vectors that have ones as their elements at all positions except for location $i$. It is obvious that $\sigma_{\vec{\boldsymbol \alpha}}$ is a composition of $\sigma_{\vec{\boldsymbol \alpha}_i}$:
        $\sigma_{\vec{\boldsymbol \alpha}} = \sigma_{\vec{\boldsymbol \alpha}_d}
        \circ \cdots \circ 
        \sigma_{\vec{\boldsymbol \alpha}_1}$.
    Therefore, we only need to prove that each $\sigma_{\vec{\boldsymbol \alpha}_i}$ belongs to $\mathcal{H}(\mathcal{F}_1)$. Since each $\sigma_{\vec{\boldsymbol \alpha}_i}$ has the same structure over a permutation, we only prove the case of $\sigma_{\vec{\boldsymbol \alpha}_1}$. Without loss of generality, we consider the case in which $d=2$. If $\alpha_1=1$, then the map is an identity, which is a flow map layer. Next, we consider the cases in which $ \alpha_1 \in (0,1)$ and $\alpha_1 >1$.

    (1) $ \alpha_1 \in (0,1)$. The norm of any $x\in\mathcal{K}$ is bounded by a constant $C$. We construct the map $\Phi$ as the following composition of three maps:
    \begin{align}
        \Phi: 
        \left(
        \begin{matrix} 
            x_1 \\ 
            x_2
        \end{matrix}
        \right)
		\mapsto
        \left(
        \begin{matrix} 
            x_1 \\ 
            x_2 + C
        \end{matrix}
        \right) 
        & 
		\mapsto
        \sigma_{\alpha_1}
        \left(
        \begin{matrix} 
            x_1 \\ 
            x_2 + C
        \end{matrix}
        \right)
        =
        \left(
        \begin{matrix} 
            \sigma_{\alpha_1}(x_1) \\ 
            x_2 + C
        \end{matrix}
        \right) \nonumber\\
        & \mapsto
        \left(
        \begin{matrix} 
            \sigma_{\alpha_1}(x_1) \\ 
            x_2
        \end{matrix}
        \right).
    \end{align}
    The first and third maps are shift maps. The second map is exactly $\sigma_{\alpha_1}$. Here, we use the fact that $\sigma_{\alpha_1}(x_2+C) = x_2+C$ for $x\in\mathcal{K}$. It is easy to verify that these three maps are flow maps; therefore, $\Phi \in \mathcal{H}(\mathcal{F}_1)$ and $\sigma_{\alpha_1}=\Phi$ on $\mathcal{K}$.

    (2) $\alpha_1 >1$. Because $\sigma_{\alpha_1}(x) = -\alpha_1 \sigma_{\beta_1}(-x)$, where $\beta_1=1/\alpha_1 \in (0,1)$, the function $\Phi$ is constructed as the following composition of four maps:
    \begin{align}
        \label{eq:ReLU_flow}
        \Phi: 
        \left( \begin{matrix}  x_1 \\  x_2 \end{matrix} \right) 
		& 
		\mapsto
        \left( \begin{matrix}  -x_1 \\  -x_2 \end{matrix} \right) \mapsto
        \left( \begin{matrix} 
            \sigma_{\beta_1}(-x_1) \\ 
            -x_2 
        \end{matrix} \right) 
		\mapsto
        \left( \begin{matrix} 
            -\sigma_{\beta_1}(-x_1) \\ 
            x_2 
        \end{matrix} \right) \nonumber \\ 
        & \mapsto
        \left(
        \begin{matrix} 
            \sigma_{\alpha_1}(x_1) \\ 
            x_2
        \end{matrix}
        \right).
    \end{align}
    The first, third, and fourth functions are linear maps corresponding to $\text{diag}(-1,-1)$ and $\text{diag}(\alpha_1,1)$. The second function is exactly $\sigma_{\vec{\boldsymbol\beta}_1}$ with $\vec{\boldsymbol\beta}_1 = (\beta_1,1)$. According to case (1) and Lemma~\ref{th:represent_HF0}, the constructed $\Phi$ is what we need.
\end{proof}

\subsubsection{Two-piecewise linear maps}
Here, we consider a special two-piecewise linear function $T^{(j)} \in C(\mathbb{R}^d,\mathbb{R}^d)$, which is useful for proving our main theorems. The map 
$T^{(j)}: x=(x^{(1)},...,x^{(d)}) \mapsto y=(y^{(1)},...,y^{(d)}), j=1,...,d$ 
\begin{align}
    \label{eq:two_piecewise_linear}
    \left\{
        \begin{aligned} 
        & y^{(i)} = x^{(i)} ,  i \neq j,\\
        & y^{(j)} = x^{(j)} + a  \sigma_{\alpha}(w_1 x^{(1)}+ \cdots +w_{d} x^{(d)} + \beta),
        \end{aligned}
    \right.
\end{align}
where $w=(w_1,...,w_d) \in \mathbb{R}^d, a, \beta \in \mathbb{R}$ and $\alpha >0$. The parameters of the function are chosen such that both $1+aw_j$ and $1+\alpha a w_j$ are positive, while $T^{(j)}$ is a one-to-one map. Lemma~\ref{th:T_as_flowmap} shows that the function $T^{(j)}$ belongs to $\mathcal{H}(\mathcal{F}_1)$.
\begin{lemma}
    \label{th:T_as_flowmap}
    Let $d\ge 2, \alpha>0$ and $\max(1,\alpha)|a w_d| < 1$; then, for any compact domain $\mathcal{K} \subset \mathbb{R}^d$, there is a map $\Phi \in \mathcal{H}(\mathcal{F}_1)$ such that $\Phi = T^{(j)}$ on $\mathcal{K}$.
\end{lemma}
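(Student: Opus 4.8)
The plan is to realize $T^{(j)}$ as a short composition of maps already known to lie in $\mathcal{H}(\mathcal{F}_1)$, namely linear maps with positive determinant (Lemma~\ref{th:represent_HF0}) and vector-parameter leaky-ReLU maps $\sigma_{\boldsymbol\alpha}$ with positive entries (Lemma~\ref{th:ReLU_flow}). By the permutation symmetry already exploited in the proof of Lemma~\ref{th:ReLU_flow}, it suffices to treat $j=d$; conjugating by the permutation linear map (which has determinant $\pm1$, and we may arrange $+1$ in dimension $d\ge2$ by composing with a further sign flip in two other coordinates) reduces the general $j$ to this case. So I would fix $j=d$ and write $T=T^{(d)}$, acting by $y^{(i)}=x^{(i)}$ for $i<d$ and $y^{(d)}=x^{(d)}+a\,\sigma_\alpha(w\cdot x+\beta)$.

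The key idea is a change of coordinates that turns the single ``scalar nonlinearity along the affine functional $w\cdot x+\beta$'' into one coordinate of a $\sigma_{\boldsymbol\alpha}$ map. First apply an affine map $L_1:x\mapsto Ax+b$ that sends $x\mapsto z$ with $z^{(d)}=w\cdot x+\beta$ and $z^{(i)}=x^{(i)}$ for $i<d$; this $A = I + \sum_{i<d} w_i E_{d,i}$ (times a rescaling to fix $w_d$, or just keep $w_d$ and note $\det A = $ the $(d,d)$ entry after the shear, which we control) is invertible, and by tracking determinants we check $\det A>0$ under the hypothesis $\max(1,\alpha)|aw_d|<1$ — this is exactly where that hypothesis is consumed, since it guarantees the relevant linear pieces of $T$ (on $\{w\cdot x+\beta\ge0\}$ and on $\{w\cdot x+\beta<0\}$) have positive Jacobian, so the intermediate linear factors can be chosen with positive determinant. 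In the $z$-coordinates, apply $\sigma_{\boldsymbol\alpha}$ with $\boldsymbol\alpha=(1,\dots,1,\alpha)$ after a preliminary shift of $z^{(d)}$ by a large constant $C$ (depending on the bound of $\mathcal{K}$ under $L_1$, using $\sigma_\alpha(t+C)=t+C$ for $t$ bounded, exactly as in Lemma~\ref{th:ReLU_flow}), producing the quantity $\sigma_\alpha(w\cdot x+\beta)$ in the last slot while leaving the others untouched. Then apply a second affine map $L_2$ that adds $a$ times that last slot to the $d$-th original coordinate and restores $z^{(i)}\mapsto x^{(i)}$: this is again a shear $I + a E_{d,d}\cdot(\text{something}) + \dots$ whose determinant I would verify is $1+aw_d$ (resp.\ the product with the $\alpha$-branch already absorbed), again positive by hypothesis. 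Composing, $L_2\circ(\text{shift back})\circ\sigma_{\boldsymbol\alpha}\circ(\text{shift})\circ L_1$ equals $T^{(d)}$ on $\mathcal{K}$, and each factor lies in $\mathcal{H}(\mathcal{F}_1)$ by Lemmas~\ref{th:represent_HF0} and \ref{th:ReLU_flow}, so $T^{(d)}\in\mathcal{H}(\mathcal{F}_1)$ on $\mathcal{K}$; Proposition on composition of flow maps then closes the argument.

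The main obstacle I anticipate is bookkeeping the determinant signs of the affine factors: $T^{(j)}$ itself is only piecewise linear, and while its two linear branches have positive Jacobian under the hypothesis $\max(1,\alpha)|aw_d|<1$, the naive factorization into $L_1,\sigma_{\boldsymbol\alpha},L_2$ can introduce individual affine factors of negative determinant (e.g.\ if some $w_i<0$ forces an orientation-reversing shear, or if the rescaling to normalize $w_d=1$ flips a sign). The fix is standard in this paper — absorb any stray $-1$ by pairing it with another $-1$ in a different coordinate (possible since $d\ge2$), so that every linear factor actually used has positive determinant and hence lies in $\mathcal{H}(\mathcal{F}_0)\subset\mathcal{H}(\mathcal{F}_1)$ by Lemma~\ref{th:represent_HF0}. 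A secondary, purely technical point is choosing the shift constant $C$ uniformly over $\mathcal{K}$: since $\mathcal{K}$ is compact and $L_1$ continuous, $w\cdot x+\beta$ is bounded on $\mathcal{K}$, so a single large $C$ works, exactly as in the proof of Lemma~\ref{th:ReLU_flow}.
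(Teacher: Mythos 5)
Your three–step factorization $L_2\circ\sigma_{\boldsymbol\alpha}\circ L_1$ does not work, and the failure is structural rather than a matter of determinant bookkeeping. Placing $\nu=w\cdot x+\beta$ into the $d$-th slot via $L_1$ overwrites $x^{(d)}$, and after $\sigma_{\boldsymbol\alpha}$ the $d$-th slot holds $\sigma_\alpha(\nu)$. The desired output coordinate is $x^{(d)}+a\,\sigma_\alpha(\nu)$, which is \emph{not} an affine function of $\bigl(x^{(1)},\dots,x^{(d-1)},\sigma_\alpha(\nu)\bigr)$: expressing $x^{(d)}$ requires recovering $\nu$ from $\sigma_\alpha(\nu)$, and $\nu\mapsto\sigma_\alpha(\nu)$ is not affine-invertible. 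Concretely, on $\{\nu\ge 0\}$ the map you need for $L_2$ has $z_d$-coefficient $1/w_d+a$, while on $\{\nu<0\}$ it has $z_d$-coefficient $1/(\alpha w_d)+a$; these disagree, so no single affine $L_2$ exists. This is exactly why the paper's construction has six factors, not three: it parks $\nu$ in the \emph{first} slot (leaving $x^{(d)}$ untouched), applies $\sigma_{(\alpha,1,\dots,1)}$, adds $a\sigma_\alpha(\nu)$ to the $d$-th slot while $x^{(d)}$ is still available, then undoes the nonlinearity on slot $1$ with $\sigma_{(1/\alpha,1,\dots,1)}$, and — crucially — inserts a further map $F_4=\mathrm{diag}(1+aw_d,1,\dots,1)\circ\sigma_{\boldsymbol\gamma}$ on slot $1$ before the final linear inverse $F_5$, because the $d$-th slot has changed and $F_5$ actually sees $\nu+w_d a\sigma_\alpha(\nu)$ rather than $\nu$. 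That corrective step $F_4$ is the one your plan has no analogue of, and it is precisely where the hypothesis $\max(1,\alpha)\,|aw_d|<1$ is consumed: it makes $1+aw_d>0$ and $(1+\alpha a w_d)/(1+aw_d)>0$, which is what Lemmas~\ref{th:represent_HF0} and~\ref{th:ReLU_flow} require.

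Two further gaps. First, your $L_1$ has Jacobian determinant $w_d$, so it is \emph{singular} when $w_d=0$ — a case squarely allowed by the hypotheses (there $\max(1,\alpha)|aw_d|=0<1$). The paper avoids this by placing $\nu$ in slot $1$ and splitting into the cases $w_1=\dots=w_{d-1}=0$ (a one–dimensional leaky-ReLU plus scaling) and $\sum_{i<d}w_i^2\neq 0$ (where one may assume $w_1>0$ after a sign flip). Second, your claim that $\max(1,\alpha)|aw_d|<1$ is what makes $\det L_1>0$ is incorrect: $\det L_1=w_d$ is unrelated to that inequality. Finally, a minor point: the shift by a large constant $C$ that you invoke is not needed at the level of your construction — it is already absorbed into the statement of Lemma~\ref{th:ReLU_flow}, and as you describe it (shifting the coordinate to which $\sigma_\alpha$ \emph{is} applied) it would produce $\nu+C$ rather than $\sigma_\alpha(\nu)$, which is not what you want.
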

\begin{proof}
    We only prove the case in which $j=d$ and simply denote $T^{(d)}$ as $T=T^{(d)}$. The case in which $j \neq d$ can be proved  similarly by relabeling the utilized index. The case in which $\|w\|=0$ is trivial; hence, we assume that $\|w\|\neq 0$. The bias $\beta$ can be absorbed by introducing two shift maps; hence, we only need to consider the case in which $\beta=0$. In fact, let $b = \beta w / \|w\|^2$; then, we have that $T = \tilde T(x+b)-b$, where $\tilde T$ has the same form as that of $T$ and zero bias.

(1) The case in which $\sum_{i=1}^{d-1} w^2_i = 0 $.
It is easy to check that $x^{(d)} + a \sigma_\alpha(w_d x^{(d)})$ 
is proportional to a generalized leaky-ReLU function of $x^{(d)}$;
then, $T$ can be represented as $T= \text{diag}(1,...,1,\lambda) \circ \sigma_{\vec{\boldsymbol \gamma}}$, $\vec{\boldsymbol \gamma} = (1,...,1,\gamma)$, where $\lambda$ and $\gamma$ are
\begin{align}
    \begin{cases}
        \lambda = (1 + a w_d) , \gamma = (1+\alpha a w_d)/\lambda, & \text{if} w_d > 0, \\
        \lambda = (1 + \alpha a w_d), 
        \gamma =  (1+ a w_d)/\lambda,& \text{if } w_d <0.
    \end{cases}
\end{align}
Then, $T$, which is restricted to $\mathcal{K}$, equals a map $\Phi$ in $\mathcal{H}(\mathcal{F}_1)$ according to Lemma~\ref{th:ReLU_flow}.

(2) The case in which $\sum_{i=1}^{d-1} w^2_i \neq 0 $. Without loss of generality, we assume that $w_1 \neq 0$. In addition, using $\sigma_\alpha(\cdot) = - \alpha \sigma_{1/\alpha}(-\cdot)$ again, we can further assume that $w_1>0$.
The map $T$ can be represented by the following composition:
\begin{align}
    T \equiv F_5 \circ \cdots \circ F_0,
\end{align}
where each mapping step is as follows:
\begin{align*} 
    \left(
        \begin{matrix} x^{(1)} \\ x^{(2:d-1)} \\ x^{(d)}
        \end{matrix}\right)
    &\underrightarrow{F_0}
    \left(
        \begin{matrix} \nu \\ x^{(2:d-1)}\\ x^{(d)}
        \end{matrix}\right)  
    \underrightarrow{F_1}
    \left(
        \begin{matrix} \sigma_\alpha{(\nu)} \\ x^{(2:d-1)}\\ x^{(d)}
        \end{matrix}\right)\\
    &\underrightarrow{F_2}
    \left(
        \begin{matrix} \sigma_\alpha{(\nu)} \\ x^{(2:d-1)}\\ x^{(d)}+ a \sigma_\alpha{(\nu)} 
        \end{matrix}\right)
    \underrightarrow{F_3}
    \left(
        \begin{matrix} \nu \\ x^{(2:d-1)}\\ x^{(d)} + a \sigma_\alpha{(\nu)} 
        \end{matrix}\right)\\
    &\underrightarrow{F_4}
    \left(
        \begin{matrix} \nu+w_d a \sigma_\alpha{(\nu)} \\ x^{(2:d-1)}\\ x^{(d)} + a \sigma_\alpha{(\nu)} 
        \end{matrix}\right)
    \underrightarrow{F_5}
    \left(
        \begin{matrix} x^{(1)} \\ x^{(2:d-1)}\\ x^{(d)} + a \sigma_\alpha{(\nu)} 
        \end{matrix}\right).
\end{align*}
Here, $\nu=w_1 x^{(1)}+\cdots+w_dx^{(d)}$, and $x^{(2:d-1)}$ represents the elements $x^{(2)},...,x^{(d-1)}$.

Note that $F_0,F_2,F_5$ are linear maps in $\mathcal{H}(\mathcal{F}_0)$:
\begin{align}
    &F_0(x) = \left(
        \begin{matrix} 
            w_1 &w_{2:d}\\ 
            0 & I_{d-1}
        \end{matrix}
        \right) x, \quad
    F_2(x) = \left(
        \begin{matrix} 
            I_{d-1} &0\\ 
            (a,0_{2:d-1}) & 1
        \end{matrix}
        \right) x, \nonumber\\
    & F_5(x) = \left(
        \begin{matrix} 
            1/w_1 &-w_{2:d}/w_1\\ 
            0 & I_{d-1}
        \end{matrix}
        \right) x,
\end{align}
where $I_{d-1}$ is the identity matrix and $(a,0_{2:d-1})=(a,0,...,0)$ has $d-2$ zeros. $F_1$ is a leaky-ReLU map $\sigma_{\vec{\boldsymbol \alpha}}$ with $\vec{\boldsymbol \alpha}=(\alpha,1,...,1)$, $F_3$ is a leaky-ReLU map $\sigma_{\vec{\boldsymbol \beta}}$ with $\vec{\boldsymbol \beta}=(1/\alpha,1,...,1)$, and $F_4$ is a flow map of $F_4=\text{diag}(1+a w_d,1,...,1) \circ \sigma_{\vec{\boldsymbol \gamma}}$ with $\vec{\boldsymbol \gamma}=((1+\alpha a w_d a)/(1+a w_d),1,...,1)$. According to Lemma~\ref{th:represent_HF0} and Lemma~\ref{th:ReLU_flow}, we clarify that each component $F_i,i=0,\cdots,5,$ restricted to any compact domain is a flow map in $\mathcal{H}(\mathcal{F}_1)$. As a result, $T$, which is restricted to $\mathcal{K}$, is a flow map in $\mathcal{H}(\mathcal{F}_1)$.
The proof is now complete.
\end{proof}

\subsection{Error estimation and splitting methods}

Here, we demonstrate that the flow maps of two systems are close to each other if their vector field functions are sufficiently close.
\begin{lemma}
    \label{th:ODE_error_estimation}
    Consider two ODE systems
    \begin{align}\label{eq:ODE_system}
        \dot{x}(t) = f_i(x(t)), \quad t\in(0,\tau), \quad i=1,2,
    \end{align}
    where the $f_i(x)$ are continuous in $x\in\mathbb{R}^d$. In addition, we assume that $f_1(x)$ is $L$-Lipschitz continuous, \emph{i.e.,} $\|f_1(x)-f_1(x')\| \le L \|x-x'\|$ for any $x,x' \in \mathbb{R}^d$. Then, for any compact domain $\Omega$ and $\varepsilon>0$, there exists a $\delta \in (0,1]$ such that the flow maps $\phi_{f_1}^\tau(x) ,\phi_{f_2}^\tau(x)$ satisfy
    \begin{align}
        \|\phi_{f_1}^\tau(x)- \phi_{f_2}^\tau(x)\| \le \varepsilon,
        \quad \forall x\in\Omega,
    \end{align}
    provided that $\|f_1(x)-f_2(x)\|< \delta$ for all $x \in \Omega_\tau$, where $\Omega_\tau$ is a compact domain defined as $\Omega_\tau = \{ x + (V+1) \tau e^{L\tau} x' ~|~ x \in \Omega, \|x'\| \le 1, V = \max_{x\in \Omega}\{\|f_1(x)\|\} \}$.
\end{lemma}

It is well known that an ODE system can be approximated via many numerical methods. In particular, we use the splitting approach \cite{McLachlan2002Splitting}. Given an ODE $\dot{x}(t)=v(x(t),t)$ with $x(0)=x_0$, where $v(x,t)$ is split into the summation of several functions,
\begin{align}
    v(x,t) = \sum_{i=1}^m v_i(x,t), m \in \mathbb{Z}^+.
\end{align}
For a given time step $\Delta t$, we define $x_k$ 
as the following iterative equation:
\begin{align}
    x_{k+1} = T_k (x_k) = T_k^{(m)} \circ \dots \circ T_k^{(2)} \circ T_k^{(1)} (x_k),
\end{align}
where the map $T_k^{(i)}: x \to y$ (this map is related to the map $T^{(j)}$ from \eqref{eq:two_piecewise_linear}, which is discussed later in the second part of the proof of Theorem~\ref{th:UAP_ReLU_family}) is given by the forward Euler discretization shown below:
\begin{align}\label{eq:splitting}
    y = T_k^{(i)}(x) = x + \Delta t v_i(x,t_k),
    \quad
    t_k = k \Delta t.
\end{align}
The following lemma demonstrates the convergence analysis of the splitting method.

\begin{lemma}[Convergence of the splitting method]
    \label{th:split_method}
    Let all $v_i(x,t), (x,t)\in \mathbb{R}^d \times [0,\tau], i=1,2,...,m,$ be piecewise constant in $t$ and Lipschitz continuous in $x$ with a shared Lipschitz constant $L>0$. Then, for any $\varepsilon>0$ and a compact domain $\Omega$, there exists a positive integer $n$ and a $\Delta t = \tau/n$ such that $\|x(\tau) - x_n\| \le \varepsilon$ for all $x_0 \in \Omega$.
\end{lemma}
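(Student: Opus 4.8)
The plan is to run the classical \emph{consistency plus stability} argument for one-step methods, specialized to the forward-Euler splitting \eqref{eq:iteration_T1}--\eqref{eq:split_method}, while carefully accommodating two nonstandard features of the setting: the field is only Lipschitz (not $C^1$) in $x$, and it is merely piecewise constant (not continuous) in $t$. Throughout write $v(x,t):=\sum_{i=1}^m v_i(x,t)$; since $v$ is globally Lipschitz in $x$ with constant $mL$ and piecewise constant in $t$, the exact solution $x(t)$ of $\dot x=v(x,t)$, $x(0)=x_0$, exists and is unique on $[0,\tau]$, and for each step $k$ the exact solution operator $\Psi_k$ carrying $x(t_k)$ to $x(t_{k+1})$ is $e^{mL\Delta t}$-Lipschitz by Gronwall.

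First I would fix a containment domain. As in the proof of Lemma~\ref{th:ODE_error_estimation}, the linear growth of $v$ and Gronwall bound $\|x(t)-x_0\|$ uniformly for $x_0\in\Omega$, $t\in[0,\tau]$, so I can choose a closed ball $\tilde\Omega$ containing every exact trajectory together with a generous buffer. Set $M:=\sup\{\|v_i(x,t)\|:x\in\tilde\Omega,\ t\in[0,\tau],\ i=1,\dots,m\}$, which is finite because each $v_i(\cdot,t)$ is $L$-Lipschitz and $t\mapsto v_i(\cdot,t)$ takes only finitely many values. Everything below is carried out under the hypothesis that all iterates $x_k$ (and their intermediate substeps) lie in $\tilde\Omega$; this hypothesis is recovered at the very end by a bootstrap induction on $k$, using $e_0=0$ together with the fact that, for $n$ large, the error bound obtained is $\le 1$.

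Next comes the per-step consistency estimate. Let $0=s_0<\dots<s_N=\tau$ be a common partition on which every $v_i(\cdot,t)$ is constant in $t$, and take $n$ so large that $\Delta t=\tau/n<\min_l(s_l-s_{l-1})$; then each grid interval contains at most one breakpoint, and at most $N$ of them contain any. On a \emph{clean} interval $[t_k,t_{k+1}]$ the field $v(\cdot,t)\equiv v(\cdot)$ is autonomous, and I would Taylor-expand both operators to first order in $\Delta t$: from $\phi^{\Delta t}_v(x)=x+\int_0^{\Delta t}v(\phi^s_v(x))\,ds$ and the Lipschitz bound $\|\phi^s_v(x)-x\|\le\|v(x)\|\,s\,e^{mLs}$ one gets $\phi^{\Delta t}_v(x)=x+\Delta t\,v(x)+O(\Delta t^2)$; and telescoping the substeps $z_i=z_{i-1}+\Delta t\,v_i(z_{i-1})$, using $\|z_{i-1}-x\|\le mM\Delta t$ and $\|v_i(z_{i-1})-v_i(x)\|\le L\|z_{i-1}-x\|$, gives $T_k(x)=z_m=x+\Delta t\,v(x)+O(\Delta t^2)$ as well. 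Hence $\|\Psi_k(x)-T_k(x)\|\le C\Delta t^2$ on $\tilde\Omega$, with $C$ depending only on $L$, $m$, $M$. On a \emph{dirty} interval I only claim the crude bound $\|\Psi_k(x)-T_k(x)\|\le\|\Psi_k(x)-x\|+\|x-T_k(x)\|\le C'\Delta t$ (again $C'$ depending on $m$, $M$), but there are at most $N$ such intervals.

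Finally, stability and accumulation. Both $\Psi_k$ and $T_k$ are Lipschitz with the common constant $L_*:=e^{mL\Delta t}\le e^{mL\tau}$, so with $e_k:=\|x(t_k)-x_k\|$ the triangle inequality gives $e_{k+1}\le\|\Psi_k(x(t_k))-\Psi_k(x_k)\|+\|\Psi_k(x_k)-T_k(x_k)\|\le L_*e_k+\eta_k$, where $\eta_k\le C\Delta t^2$ on clean steps and $\eta_k\le C'\Delta t$ on dirty ones. Iterating from $e_0=0$ yields $e_n\le L_*^{\,n}\sum_{k}L_*^{-k}\eta_k\le e^{mL\tau}\bigl(nC\Delta t^2+NC'\Delta t\bigr)=e^{mL\tau}\bigl(C\tau+NC'\bigr)\Delta t$, which tends to $0$ as $n\to\infty$; choosing $n$ large enough makes this $\le\min(\varepsilon,1)$, and the bound $\le 1$ validates the containment bootstrap (the same inequality, applied step by step, keeps $e_k\le 1$ so that $x_k\in\tilde\Omega$ throughout). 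Then $\|x(\tau)-x_n\|=e_n\le\varepsilon$ for all $x_0\in\Omega$. I expect the main obstacle to be bookkeeping rather than conceptual: keeping the containment bootstrap honest (the estimates are valid only on $\tilde\Omega$, which one must know the iterates never leave), and correctly discounting the finitely many grid steps that straddle a discontinuity of $t\mapsto v_i(\cdot,t)$, whose total contribution is $O(N\Delta t)\to 0$.
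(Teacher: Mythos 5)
Your proof is correct, but it organizes the argument differently from the paper's. The paper first removes the time-dependence entirely: since $v(\cdot,t)$ is piecewise constant in $t$, it writes $x(\tau)$ as a finite composition of autonomous flows and invokes Proposition~\ref{th:composition_approximation} to reduce to the case where each $v_i$ is constant in $t$; it then introduces the piecewise-constant interpolant $\hat x(t)=x_k$ for $t\in[t_k,t_{k+1})$, compares the two integral representations, and applies Gronwall's inequality to the continuous error $e(t)=x(t)-\hat x(t)$. You instead keep the time-dependence and run a discrete consistency-plus-stability argument on the one-step operators $\Psi_k$ and $T_k$, with the extra (and nice) observation that grid intervals straddling a discontinuity of $t\mapsto v(\cdot,t)$ only contribute a local error of order $\Delta t$ rather than $\Delta t^2$, but there are at most $N$ of them, so their total contribution is $O(N\Delta t)\to 0$. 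The two routes are morally equivalent (both ultimately rest on a discrete Gronwall/geometric-series accumulation), but yours has a small advantage: it delivers the lemma's literal conclusion with a \emph{uniform} grid $\Delta t=\tau/n$ regardless of whether the breakpoints of $v$ align with the grid, whereas the paper's reduction via composition tacitly approximates on each constancy interval separately, which would naturally produce a non-uniform partition unless one adds a remark. Conversely, the paper's reduction is shorter and reuses machinery (Proposition~\ref{th:composition_approximation}) already in place. Your containment bootstrap is handled more explicitly than in the paper, which is a point in your favor since the Lipschitz constants and bounds on $\|v_i\|$ are only valid on the compact set the iterates are assumed to stay in.
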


The main message of Lemma~\ref{th:ODE_error_estimation} and Lemma~\ref{th:split_method} is that one can approximate a flow map with a complex field by compositing flow maps with some simpler fields. The proof is standard, and we leave it for the Appendix.

\subsection{Proof of Theorem~\ref{th:UAP_span_F_to_HF}}
\label{sec:proof_UAP_span_F_to_HF}

We begin with a lemma for composing flow maps. For any natural number $n$, we denote $f^{\circ n} = f\circ f \circ \cdots \circ f$ as the function that composites $f$ for $n$ times.
\begin{lemma}
    \label{th:flow_sum_f}
    Let $\mathcal{F}=\{f_1,\cdots f_m\}$ be a finite control family, where $f_i:\mathbb{R}^d\to \mathbb{R}^d$ denotes Lipschitz functions; then, for any compact domain $\Omega \subset \mathbb{R}^d$, any function
    $f = \sum_{i=1}^m a_i f_i$ with $a_i \in \mathbb{R}^+$, and $\tau \ge 0$, the flow map $\phi_f^\tau$ is in the closure of $\mathcal{H}(\mathcal{F})$; 
    \emph{i.e.}, $\phi_f^\tau \in {\text{cl} (\mathcal{H}(\mathcal{F}))}$ under the topology of $C(\Omega)$.
    
\end{lemma}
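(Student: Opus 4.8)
The plan is to obtain $\phi_f^\tau$ as a uniform limit on $\Omega$ of $n$-fold compositions of a single ``one-step'' map assembled from the individual sub-flows $\phi_{v_i}^{s}$ --- that is, a Lie--Trotter (operator splitting) construction in which the exact sub-flows play the role of the forward Euler steps of Lemma~\ref{th:split_method}. The only algebraic input is the time-rescaling identity $\phi_{a v}^{h} = \phi_{v}^{a h}$, valid for every $a \ge 0$, which holds because $t \mapsto \phi_v^{a t}(x)$ solves $\dot y = a v(y)$ with $y(0)=x$; in particular $\phi_{a_i v_i}^{h} = \phi_{v_i}^{a_i h} \in \mathcal{H}(\mathcal{F})$ whenever $a_i h \ge 0$. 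If $\tau = 0$ the map $\phi_f^0$ is the identity, which lies in $\mathcal{H}(\mathcal{F})$; so we may assume $\tau > 0$ and put $h = \tau/n$.

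First I would introduce the one-step map
$$
\Psi_h \;=\; \phi_{v_m}^{a_m h} \circ \cdots \circ \phi_{v_1}^{a_1 h},
$$
which belongs to $\mathcal{H}(\mathcal{F})$ by the remark above, and hence so does $\Psi_h^{\circ n}$. The next step is a local consistency estimate: on a fixed enlarged compact set $\widetilde\Omega$ one has $\|\Psi_h(x) - \phi_f^{h}(x)\| \le C h^2$, where $C$ depends only on $\tau$, on $\max_i \sup_{\widetilde\Omega} \|v_i\|$ and on the common Lipschitz constant. This follows from the integral representation $\phi_{v_i}^{s}(x) = x + \int_0^s v_i(\phi_{v_i}^{r}(x))\,dr$, which together with the Lipschitz bound on $v_i$ gives $\phi_{v_i}^{s}(x) = x + s\,v_i(x) + O(s^2)$ uniformly on $\widetilde\Omega$; composing the $m$ sub-flows then yields $\Psi_h(x) = x + h \sum_i a_i v_i(x) + O(h^2) = x + h f(x) + O(h^2)$, and the same expansion applied to $\phi_f^{h}$ gives the claim.

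Next I would carry out the standard stability argument --- the discrete counterpart of the Gronwall estimates already used in the proofs of Lemma~\ref{th:ODE_error_estimation} and Lemma~\ref{th:split_method}. Setting $e_k = \Psi_h^{\circ k}(x) - \phi_f^{k h}(x)$ and using that $\phi_f^{h}$ is $e^{Lh}$-Lipschitz (because $f = \sum_i a_i v_i$ is $L$-Lipschitz for a suitable constant $L$), one obtains $\|e_{k+1}\| \le e^{Lh}\|e_k\| + C h^2$ and hence, summing the geometric series and using $e^{Lh}-1 \ge Lh$, $\|e_n\| \le C h^2\,\frac{e^{L\tau}-1}{e^{Lh}-1} \le C' h \to 0$. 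Thus for $n$ large enough $\|\Psi_h^{\circ n}(x) - \phi_f^\tau(x)\| \le \varepsilon$ for all $x \in \Omega$, which is exactly the assertion $\phi_f^\tau \in \overline{\mathcal{H}(\mathcal{F})}$ in the $C(\Omega)$ topology. Alternatively one could invoke Lemma~\ref{th:split_method} directly for the Euler splitting of $v = \sum_i a_i v_i$ and then replace each Euler sub-step by the exact sub-flow $\phi_{v_i}^{a_i h}$, the per-step discrepancy again being $O(h^2)$; this routes through the same estimates.

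The step I expect to be the main obstacle is the domain bookkeeping, since both the consistency estimate and the stability step require that every intermediate point produced during the $n$ iterations stay inside one fixed compact set $\widetilde\Omega$ on which the norms and Lipschitz constants of the $v_i$ are controlled, uniformly in $n$. I would handle this exactly as in the proof of Lemma~\ref{th:ODE_error_estimation}: because $f$ is Lipschitz and bounded on compacts, the exact trajectories issuing from $\Omega$ up to time $\tau$ lie in some closed ball $B$; taking $\widetilde\Omega$ to be a slightly larger closed ball and restricting to small $h$, an a priori induction on $k$ --- controlling $\|\Psi_h^{\circ k}(x) - x\|$ together with the one-step displacement bound coming from the sup-norms of the $v_i$ --- shows that the discrete iterates never leave $\widetilde\Omega$, so all the estimates above are legitimate.
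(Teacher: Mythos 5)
Your proposal is correct and takes essentially the same approach as the paper: both are Lie--Trotter (product-formula) arguments in which one forms the $n$-fold composition of the exact sub-flows $\phi_{v_i}^{a_i\tau/n}$, establishes a local $O(h^2)$ consistency estimate, and propagates it by a Gronwall-type stability bound with the usual care about keeping iterates in a fixed compact set. The paper organizes the error slightly differently --- it triangulates through the forward-Euler iterate $x_n$ of Lemma~\ref{th:split_method} and bounds $T^{(i)}$ against $\phi_{v_i}^{\Delta t}$ --- which is precisely the alternative you sketch in your last paragraph, so the two routes are interchangeable.
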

\vspace{12pt}
\begin{proof}
    The lemma is implied by the well-known product formula \cite{chorin1978product}; that is, for any $\varepsilon>0$, there exists a sufficiently large integer $n$ such that
    \begin{align}
        \label{eq:product_approximation}
        \big \|\phi_{\sum_{i=1}^m a_i f_i}^\tau(x)-
        \big(\phi_{f_m}^{{a_m \tau}/{n}} \circ \cdots \phi_{f_1}^{{a_1 \tau}/{n}} \big)^{\circ n}(x)
        \big\| \le \varepsilon, \notag \\ \forall x\in \Omega.
    \end{align} 
    The technique of this proof is similar to that used for Lemma~\ref{th:split_method}, but here, the discretization map
    $T_k^{(i)}$ in Eq.~\eqref{eq:splitting} is independent of $k$ and is replaced by its continuous version $\phi_{a_i f_i}^{\Delta t}, \Delta t = \tau/n$. Given that $\phi_{f_i}^{{ a_i\tau}/{n}} = \phi_{a_i f_i}^{{\tau}/{n}}$ as $a_i \ge 0$, we can absorb $a_i$ with $f_i$; hence, we can assume that $a_i=1$ later. Note that the difference between $T_k^{(i)}$ and $\phi_{f_i}^{\Delta t}$ is on the order of $O(\Delta t^2)$, and the error accumulation when calculating $x_k$ by replacing all $T_k^{(i)}$ with $\phi_{f_i}^{\Delta t}$ is bounded by $O(\text{e}^{m L k\Delta t} \Delta t)$. In fact, we have
    \begin{align*}
        \|T_k^{(i)}(x) - \phi_{f_i}^{\Delta t}(x)\| 
        &\le
        \int_0^{\Delta t} \| f_i(x) - f_i(\phi_{f_i}^{s}(x)) \| ds \notag \\
        &\le L M \text{e}^{L \Delta t} (\Delta t)^2, \quad \forall x \in \Omega_\tau,
    \end{align*}
    and 
    \begin{align*}
        \big\|x_n - \big(\phi_{f_m}^{{\tau}/{n}} \circ \cdots \phi_{f_1}^{{\tau}/{n}} \big)^{\circ n}(x_0) \big\|
        \le 
        M \text{e}^{m L \tau} \text{e}^{L \Delta t} \Delta t, \notag \\
        \forall x_0 \in \Omega,
    \end{align*}
    where $\Omega_\tau$ is a compact set such that the flows starting from $x_0 \in \Omega$ are contained in $\Omega_\tau$, and $M$ is a constant that is dependent on $\Omega_\tau$ and $f_i$. Hence, with Lemma~\ref{th:split_method}, when $n$ is sufficiently large such that $\Delta t=\frac{\tau}{n}$ is small enough, the inequality Eq.~\eqref{eq:product_approximation} can be obtained.
\end{proof}

\begin{proof}[Proof of Theorem~\ref{th:UAP_span_F_to_HF}]
    According to Proposition~\ref{th:UAP_F_to_HF}, $\mathcal{H}(\text{span}(\mathcal{F}))$ has the $C(\Omega)$-UAP for $\text{Diff}_0(\mathbb{R}^d)$. Therefore, we only need to show that ${\mathcal{H}(\mathrm{span}(\mathcal{F}))} \subset {\text{cl}(\mathcal{H}(\mathcal{F}))}$. Since the UAP can be inherited via function composition, as shown in Proposition~\ref{th:composition_approximation}, it is sufficient to show that $\phi_g^\tau \in {\text{cl}(\mathcal{H}(\mathcal{F}))}$ for any $g \in \text{span}(\mathcal{F})$. This is directly implied by Lemma~\ref{th:flow_sum_f}, as $\mathcal{F}$ is symmetric, where $g$ can be represented as a linear combination of some $f_i \in \mathcal{F}$ with positive coefficients.
\end{proof}

\subsection{Proof of Theorem~\ref{th:UAP_affine} (the first part)}
\label{sec:proof_UAP_affine}

Let us recall the classic UAP possessed by neural networks: one-hidden-layer networks obey the $C$-UAP for continuous functions. Specifically, if an activation function $\sigma: \mathbb{R}\to\mathbb{R}$ is nonlinear and Lipschitz continuous (which implies that $\sigma$ is nonpolynomial), then for any compact domain $\mathcal{K} \subset \mathbb{R}^d$, $g\in C(\mathcal{K}, \mathbb{R}^d)$ and $\varepsilon>0$, there exists a one-hidden-layer network $h(x)$ with $N$ neurons such that
\begin{align}
    \label{eq:NN_one_hidden_layer}
    \|h(x)-g(x)\|
    =
    \Big \|\sum\limits_{i=1}^N s_i \sigma(w_i \cdot x+b_i) - g(x) \Big\|
    \le \varepsilon, \notag \\ 
    \forall x \in \mathcal{K},
\end{align}
where $s_i, w_i \in \mathbb{R}^d, b_i \in \mathbb{R}$ are parameters. Note that the nonlinearity of $f$ allows us to design the employed activation function $\sigma$ accordingly.
\begin{lemma}
    \label{th:nonlinear_function}
    Let $f_s \in C(\mathbb{R}^d,\mathbb{R})$ be a scalar nonlinear function; then, there are two vectors $a,b \in \mathbb{R}^d$ such that $\sigma(\cdot) = f_s(\cdot a + b) \in C(\mathbb{R},\mathbb{R})$ is nonlinear.
\end{lemma}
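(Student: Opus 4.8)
The plan is to argue by contraposition. I will show that if the one‑variable restriction $t\mapsto f_s(ta+b)$ happens to be affine (which is what ``linear'' means here for a scalar function, in view of the paper's convention that includes affine maps among the linear ones) for \emph{every} choice of $a,b\in\mathbb{R}^d$, then $f_s$ itself is an affine function on $\mathbb{R}^d$, contradicting the assumption that $f_s$ is nonlinear. Hence some line $t\mapsto f_s(ta+b)$ must fail to be affine, and the pair $(a,b)$ realizing that failure (necessarily with $a\neq 0$) is exactly what the lemma asks for.

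First I would normalize by setting $g(x)=f_s(x)-f_s(0)$, so that $g(0)=0$ and $g$ still restricts to an affine function on every line. Restricting $g$ to a line through the origin, $t\mapsto g(ta)$ is affine in $t$ and vanishes at $t=0$, so $g(ta)=t\,g(a)$ for all $t\in\mathbb{R}$ and all $a\in\mathbb{R}^d$; this is homogeneity.

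Next, for additivity I would restrict $g$ to the line $t\mapsto a+t(b-a)$ through $a$ (at $t=0$) and $b$ (at $t=1$). Affineness in $t$ forces $g\bigl(a+t(b-a)\bigr)=g(a)+t\,(g(b)-g(a))$, and evaluating at $t=\tfrac12$ gives $g\bigl(\tfrac{a+b}{2}\bigr)=\tfrac12\bigl(g(a)+g(b)\bigr)$. On the other hand, homogeneity gives $g\bigl(\tfrac{a+b}{2}\bigr)=\tfrac12 g(a+b)$, so $g(a+b)=g(a)+g(b)$. Thus $g$ is additive and homogeneous, hence linear, and $f_s=g+f_s(0)$ is affine — a contradiction. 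This completes the contrapositive and hence the lemma.

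There is no genuine obstacle in this argument; the only two points needing care are (i) the paper's terminology, where ``linear'' for a scalar function means affine $t\mapsto ct+e$, so the contrapositive must conclude affineness of $f_s$ rather than honest linearity, and (ii) the midpoint identity, which is the device that converts the ``affine on every line'' hypothesis into additivity. Continuity of $f_s$ is not actually used, though it is available from the hypothesis.
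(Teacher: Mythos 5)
Your proof is correct and takes essentially the same route as the paper: contraposition plus the midpoint trick ($t=\tfrac12$) to convert ``affine on every line'' into additivity, with homogeneity coming from lines through the origin. The paper phrases the additivity step coordinatewise, building up $h(x_1e_1+\cdots+x_de_d)=\sum_i x_i h(e_i)$, whereas you prove $g(a+b)=g(a)+g(b)$ directly; this is a cosmetic rather than substantive difference.
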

\begin{proof}
    Employing a proof by contradiction, we prove that a function $g$ must be linear if all $g(\cdot a + b), a,b \in \mathbb{R}^d,$ are linear. In fact, $g$ is determined by its value at the coordinate vectors $e_1,...,e_d$ and the origin. We define $h(\cdot) = g(\cdot) - g(0)$; then, the partial linearity of $g$ and $h$ shows that
    \begin{align}
        h( (a-b) t + b ) = (h(a) - h(b)) t + h(b), \notag \\
        \forall a,b \in \mathbb{R}^d, t\in\mathbb{R}.
    \end{align}
    In particular, $h(x_i e_i) = x_i h(e_i)$ for any $x_i \in \mathbb{R}$. Choosing $t=1/2$, we obtain the following:
    \begin{align*}
        h( x_1 e_1 + x_2 e_2 )
        &=
        h( (2x_1 e_1 - 2x_2 e_2)t + 2x_2 e_2) \\
        &= (h(2x_1 e_1) - h(2x_2 e_2)) t + h(2x_2 e_2) \\ 
        & = x_1 h( e_1) + x_2 h(e_2 ).
    \end{align*}
    Similarly, we can show that 
    $h(x_1 e_1 +...+ x_d e_d)
    = x_1 h( e_1) +...+ x_d h(e_d)$, which indicates that $h$ and hence $g$ are linear functions.
\end{proof}

Note that the above result does not hold if the vector $a$ is restricted to being a coordinate vector. Setting $f(x_1,...,x_d)=x_1 x_2 \cdots x_d$ offers such a counterexample.

On the basis of Lemma~\ref{th:nonlinear_function}, we can provide a proof for the first part of Theorem~\ref{th:UAP_affine}.

\begin{proof}[Proof of Theorem~\ref{th:UAP_affine} (the first part)]
    We prove that $\mathcal{H}(\mathcal{F}_{\text{aff}}(f))$ has the $C(\Omega)$-UAP for $\mathrm{Diff}_0(\mathbb{R}^d)$ if $f \in C(\mathbb{R}^d,\mathbb{R}^d)$ is Lipschitz continuous and nonlinear. Employing Theorem~\ref{th:UAP_span_F_to_HF}, we only need to show that $\text{span}(\mathcal{F}_{\text{aff}}(f))$ is dense in $C(\mathcal{K},\mathbb{R}^d)$ for every compact domain $\mathcal{K} \subset \mathbb{R}^d$.

Since $f=(f_1,...,f_d)$ is nonlinear, at least one of its components is nonlinear. Without loss of generality, we assume that $f_1(x)=f_1(x_1,x_2,...,x_d)$ is nonlinear. In addition, the affine invariance property and Lemma~\ref{th:nonlinear_function} allow us to further assume that the activation function $\sigma: \mathbb{R}\to\mathbb{R}$ defined by $\sigma(\cdot)=f_1(\cdot e_1)=e_1^T f(\cdot,0,...,0)$ is nonlinear. Then, any $g\in C(\mathcal{K}, \mathbb{R}^d)$ can be approximated by the one-hidden-layer network $h(x)$ described in Eq.~\eqref{eq:NN_one_hidden_layer}, which becomes
    \begin{align}
        \label{eq:h_one_hidden_sigma}
        h(x)
        = 
        \sum\limits_{i=1}^N s_i \sigma(w_i \cdot x+b_i)
        =
        \sum\limits_{i=1}^N s_i e_1^T f\big( e_1 w_i^T x+b_i e_1 \big).
    \end{align}
    Since $s_i e_1^T, e_1 w_i^T \in \mathbb{R}^{d\times d}$ and $b_i e_1 \in \mathbb{R}^d$, we have that $h \in \text{span}(\mathcal{F}_{\text{aff}}(f))$. Hence, the proof is finished based on the UAP of neural networks.
\end{proof}
Note that the idea of the above proof cannot be directly employed to prove the second part of Theorem~\ref{th:UAP_affine}. In other words, the continuous function $g\in C(\mathcal{K}, \mathbb{R}^d)$ or $h$ in Eq.~\eqref{eq:h_one_hidden_sigma} cannot belong to the closure $\text{cl(span}(\mathcal{F}_{\text{diag}}(f)))$. For example, if $f=\text{ReLU}, d=2, \mathcal{K}=[-1,1]^2$, then $\text{cl}({\text{span}(\mathcal{F}_{\text{diag}}(\text{ReLU}))})$ only contains functions ${v}: \mathbb{R}^2 \to \mathbb{R}^2$ with the form {$v(x_1,x_2) = (v_1(x_1),v_2(x_2)))$}. This setting fails to approximate various functions, such as $g(x_1,x_2)=(x_1 x_2, 0)$, in $C(\mathcal{K}, \mathbb{R}^2)$. However, when $d\ge 2$, the ReLU function is nonlinear but not fully coordinate nonlinear (\emph{e.g.,} the first component of $\text{ReLU}(x_1,x_2)$ is $\text{ReLU}(x_1)$, which is constant and linear for the second coordinate $x_2$); hence, we need more techniques to address the second part of Theorem~\ref{th:UAP_affine}.

\subsection{Proof of Theorem~\ref{th:UAP_ReLU_family}}
\label{sec:proof_UAP_ReLU_family}

We consider the cases in which $d=1$ and $d\ge 2$ separately.

\begin{lemma}
    \label{th:HF1and2_d=1}
    Let $d=1$ and $\Omega \subset \mathbb{R}$ be a compact interval. Then, $\mathcal{H}(\mathcal{F}_2)$ has the $C(\Omega)$-UAP for $\mathrm{Diff}_0(\mathbb{R})$, and there exists a function $\Phi \in \mathrm{Diff}_0(\mathbb{R})$ that cannot be approximated by the functions contained in $\mathcal{H}(\mathcal{F}_1)$.
\end{lemma}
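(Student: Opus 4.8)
The plan is to treat the two assertions separately, using that in dimension one the generators of $\mathcal{H}(\mathcal{F}_1)$ and $\mathcal{H}(\mathcal{F}_2)$ are completely explicit.

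\textit{The UAP of $\mathcal{H}(\mathcal{F}_2)$.} First I would reduce the target to piecewise-linear homeomorphisms. Given $\Phi\in\mathrm{Diff}_0(\mathbb{R})$, $\Omega=[a,b]$ and $\varepsilon>0$: since $\Phi$ is $C^1$ with $\Phi'>0$ on the compact set $\Omega$, the piecewise-linear interpolant $g$ of $\Phi$ on a sufficiently fine mesh of $[a,b]$ satisfies $\|g-\Phi\|_{C(\Omega)}\le\varepsilon$ and is strictly increasing (each of its slopes equals some $\Phi'(\xi)\ge\min_{\Omega}\Phi'>0$ by the mean value theorem); extending $g$ past $a$ and $b$ along its two extreme affine pieces turns it into a strictly increasing piecewise-linear homeomorphism of $\mathbb{R}$ with finitely many breakpoints $t_1<\dots<t_m$ and positive slopes $s_0,\dots,s_m$. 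It then suffices to show every such $g$ lies in $\mathcal{H}(\mathcal{F}_2)$, which I would do by peeling breakpoints off from the left: put $g_0(x)=s_0(x-t_1)+g(t_1)$ and, for $k=1,\dots,m$, set $g_k=h_k\circ g_{k-1}$, where $h_k$ is the one-kink map with slope $1$ on $\{y\le g(t_k)\}$ and slope $\gamma_k:=s_k/s_{k-1}$ on $\{y>g(t_k)\}$; a short computation shows $g_k$ agrees with $g$ on $(-\infty,t_{k+1}]$ and is affine with slope $s_k$ beyond $t_k$, so $g_m=g$. Finally $h_k(y)=\gamma_k\,\sigma_{1/\gamma_k}(y-g(t_k))+g(t_k)$, and $\gamma_k\,\sigma_{1/\gamma_k}$ is exactly the flow map $\phi^{\ln\gamma_k}_{\sigma_0}$ when $\gamma_k\ge 1$ and $\phi^{-\ln\gamma_k}_{-\sigma_0}$ when $\gamma_k<1$ (Table~\ref{tab:flowmaps}); composing with translations (flow maps of constant fields in $\mathcal{F}_0$) gives $h_k\in\mathcal{H}(\mathcal{F}_2)$, hence $g=h_m\circ\dots\circ h_1\circ g_0\in\mathcal{H}(\mathcal{F}_2)$.

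\textit{Failure of $\mathcal{H}(\mathcal{F}_1)$.} The key observation is that in dimension one every generator of $\mathcal{H}(\mathcal{F}_1)$ is a convex increasing function of $\mathbb{R}$: the flow map of $x\mapsto ax+b$ is the affine map with positive slope $e^{a\tau}$ (or $x\mapsto x+b\tau$ when $a=0$), while $\phi^\tau_{\sigma_0}$, $\tau\ge 0$, equals $e^\tau x$ for $x\ge 0$ and $x$ for $x<0$, which is convex and increasing. Since a composition of convex increasing maps is again convex and increasing, every element of $\mathcal{H}(\mathcal{F}_1)$ is convex and increasing, and a uniform limit on $\Omega$ of convex functions is convex on $\Omega$. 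Thus any $\Phi\in\mathrm{Diff}_0(\mathbb{R})$ that is not convex on $\Omega$ is at a positive $C(\Omega)$-distance from $\mathcal{H}(\mathcal{F}_1)$; assuming $\Omega$ is nondegenerate, an explicit witness is $\Phi(x)=x+\arctan(x-c)$ with $c<\inf\Omega$, which is a smooth diffeomorphism of $\mathbb{R}$ (since $\Phi'\in(1,2]$) that is strictly concave, hence not convex, on $\Omega$.

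\textit{Main obstacle.} The only real content is spotting the invariant: once one recognizes that $\mathcal{H}(\mathcal{F}_1)$ in $d=1$ consists of convex increasing maps — so that adjoining $-\mathrm{ReLU}$ is precisely what is needed to also build the concave kinks, and therefore all one-dimensional piecewise-linear homeomorphisms — both parts reduce to elementary facts. The inductive realization of an arbitrary one-dimensional piecewise-linear homeomorphism as a composition of one-kink flow maps is the most computational step, but it is routine bookkeeping.
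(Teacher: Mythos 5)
Your proof is correct, and it takes a noticeably different (and in places more careful) route than the paper.

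For the convexity obstruction: the paper argues exactly as you do that every element of $\mathcal{H}(\mathcal{F}_1)$ is convex in $d=1$, but states only ``composition of convex functions is convex,'' which as a general statement is false — one needs the outer map to also be increasing, a point you make explicitly. More importantly, the paper's proposed witness $\Phi(x)=-e^x$ is orientation-\emph{reversing} (its derivative is negative), so it does not lie in $\mathrm{Diff}_0(\mathbb{R})$ and cannot serve as the counterexample; your choice $\Phi(x)=x+\arctan(x-c)$ with $c<\inf\Omega$ is genuinely in $\mathrm{Diff}_0(\mathbb{R})$ and strictly concave on $\Omega$, so it actually completes the argument. For the positive direction, the paper invokes an external result from \cite{Duan2022Vanilla} that monotone functions on an interval are approximable by width-one leaky-ReLU networks, and then massages signs using $\sigma_\alpha(-x)=-\sigma_{1/\alpha}(x)$; your proof instead is self-contained: approximate $\Phi$ by an increasing piecewise-linear homeomorphism $g$, then peel off breakpoints from the left and realize each one-kink map $h_k(y)=\gamma_k\,\sigma_{1/\gamma_k}(y-g(t_k))+g(t_k)$ as a composition of a translation (a flow of a constant field), the flow $\phi^{\ln\gamma_k}_{\sigma_0}$ or $\phi^{-\ln\gamma_k}_{-\sigma_0}$, and the inverse translation. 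This avoids the citation and makes transparent exactly where $-\sigma_0$ is needed, namely when a slope ratio $\gamma_k<1$ occurs, which is precisely the concavity that $\mathcal{F}_1$ cannot produce. Both approaches are essentially the same in spirit for the obstruction; for the UAP your constructive decomposition is a reasonable elementary substitute for the cited result. One small caveat you already flagged: the obstruction needs $\Omega$ to be nondegenerate, which should be assumed (or noted as trivial otherwise).
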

\begin{proof}
    Note that the functions in $\mathrm{Diff}_0(\mathbb{R})$ are strictly monotonically increasing. However, it is easy to check that $\mathcal{H}(\mathcal{F}_1)$ contains only convex functions since $\phi^\tau_{A\cdot+b}$ and $\phi^\tau_{\sigma_0}$ are monotonically increasing convex functions and the composition of increasing convex functions is also convex. Hence, $\mathcal{H}(\mathcal{F}_1)$ does not have the $C(\Omega)$-UAP for $\mathrm{Diff}_0(\mathbb{R})$; for example, the concave function $\Phi(x)= x-e^{-x}$ in $\mathrm{Diff}_0(\mathbb{R})$ cannot be approximated by the functions contained in $\mathcal{H}(\mathcal{F}_1)$.

For the hypothesis space $\mathcal{H}(\mathcal{F}_2)$, we adopt the results of \cite{Duan2022Vanilla}, which shows that a monotonic function can be approximated by leaky-ReLU networks. That is, for any $\Psi \in \mathrm{Diff}_0(\mathbb{R})$ and $\varepsilon>0$, there exists a leaky-ReLU network $g(x)$ with a width of one such that
    $
    \|\Psi-g\|_{C(\Omega)} \le \varepsilon.
    $
    Notably, $g(x)$ is also a monotonically increasing function, and the number of negative weight parameters included in $g$ must be even; otherwise, $g$ is a monotonically decreasing function. By utilizing the relation $\sigma_\alpha(-x)=-\alpha\sigma_{1/\alpha}(x)$, we can offset the negative weights in $g$ and deduce that $g\in \mathcal{H}(\mathcal{F}_2)$.
    Hence, $\Psi$ can be approximated by functions in $\mathcal{H}(\mathcal{F}_2)$, and the $C$-UAP of $\mathcal{H}(\mathcal{F}_2)$ is proved.
\end{proof}

Before presenting the proof for the case in which $d\ge 2$, we introduce our \emph{flow-discretization-flow} approach (motivated by \cite{li2023minimum}), which contains three steps. First, motivated by \cite{agrachev2009controllability, Agrachev2010Dynamics, Caponigro2011Orientation}, we approximate the OP diffeomorphism via the flow map of neural ODEs. We then discretize the neural ODEs with the splitting method, which approximates the flow maps by composing a sequence of functions (motivated by \cite{Duan2022Vanilla, li2023minimum}). Finally, we clarify that all the employed composition functions are flow maps in $\mathcal{H}(\mathcal{F}_1)$.

\begin{proof}[Proof of Theorem~\ref{th:UAP_ReLU_family}]
    According to Lemma~\ref{th:HF1and2_d=1}, we only need to prove that $\mathcal{H}(\mathcal{F}_1)$ has the $C(\Omega)$-UAP for $\mathrm{Diff}_0(\mathbb{R}^d)$ when $d\ge 2$. For any OP diffeomorphism $\Psi$ of $\mathbb{R}^d$, a compact domain $\Omega$ and $\varepsilon>0$, we construct a function $\phi \in \mathcal{H}(\mathcal{F}_1)$ according to the {flow-discretization-flow} approach such that $\|\Psi - \phi\|_{C(\Omega)} \le \varepsilon$.

(1) Approximating $\Psi$ via a flow map of the associated neural ODEs. According to Proposition~\ref{th:UAP_F_to_HF} and the UAP of neural networks, there exists a one-hidden-neural network $v(x,t)$ with $N$ hidden neurons and a leaky-ReLU activation function $\sigma_\alpha, \alpha \in (0,1)$:
    \begin{align}
        v(x,t) = 
        \sum\limits_{i=1}^N s_i(t)\sigma_\alpha(w_i(t) \cdot x+b_i(t))
    \end{align}
    where $s_i, w_i \in \mathbb{R}^d, b_i \in \mathbb{R}$ are piecewise constant functions in $t$ such that their flow map $\phi_v^\tau, \tau=1,$ approximates $\Psi$ uniformly,
    \emph{i.e.}, 
    \begin{align}
        \|\Psi(x_0)-\phi_v^\tau(x_0)\| \le \varepsilon/2, 
        \quad \forall x_0 \in \Omega.
    \end{align}

    (2) Discretizing $\phi_v^\tau$ via the splitting method. The vector-valued function $v$ can be split into $Nd$ summed functions $v(x,t) = \sum_{i=1}^N \sum_{j=1}^d v_{ij}(x,t) e_j$, where $e_j$ is the $j$-th coordinate vector and the $v_{ij}(x,t) = s_{ij}(t) \sigma_\alpha(w_{i}(t) \cdot x+b_i(t))$ are scalar Lipschitz functions. According to Lemma~\ref{th:split_method}, there is a time step $\Delta t = \tau/n$, where $n$ is sufficiently large such that
    \begin{align}
        \|\Phi(x_0) - \phi_v^\tau(x_0) \| \le \varepsilon/2,
        \quad \forall x_0 \in \Omega,
    \end{align}
    where the function $\Phi$ is defined via iteration:
    \begin{align}
        \label{eq:composition_split}
        x_n &= \Phi(x_0) = T_n \circ \cdots \circ T_1 (x_0) \notag \\
        &=
        T_n^{(N,d)} \circ \dots \circ T_n^{(1,2)} \circ T_n^{(1,1)} \notag \\ 
        &~~~~\circ
        \cdots \circ
        T_1^{(N,d)} \circ \cdots \circ T_1^{(1,2)} \circ T_1^{(1,1)} 
        (x_0).
    \end{align}
    Here, the $T^{(i,j)}_k$ in $\Phi$ are smooth and Lipschitz continuous on $\mathbb{R}^d$ and are represented by $T_k^{(i,j)}: x=(x^{(1)},...,x^{(d)}) \mapsto y=(y^{(1)},...,y^{(d)})$ with
    \begin{align}\label{eq:map_T}
        \left\{
        \begin{aligned} 
        &y^{(l)} =  x^{(l)} , \quad l \neq j, \\
        &y^{(j)} =  x^{(j)} + \Delta t v_{ij}(x,t_k),
        \end{aligned}
        \right.
    \end{align}
    where $v_{ij}(x,t_k)=s_{ij}(t_k) \sigma_\alpha(w_{i}(t_k) \cdot x+b_i(t_k))$.

    (3) Representing each $T_k^{(i,j)}$ as a flow map in $\mathcal{H}(\mathcal{F}_1)$. This is achieved according to Lemma~\ref{th:T_as_flowmap}. In fact, we define
    \begin{align}
        M = \sup\{ |s_{ij}(t) w_{ij}(t)| ~|~ i,j=1,...,d, t \in [0,\tau] \},
    \end{align}
    and let $\Delta t < 1/M$ or $n >M$; then, the condition imposed by Lemma~\ref{th:T_as_flowmap} is satisfied. As a consequence, $\Phi \in \mathcal{H}(\mathcal{F}_1)$. Combining (1) and (2), we have that $\|\Psi - \Phi\|_{C(\Omega)} \le \varepsilon$, which finishes the proof.
\end{proof}

\subsection{Proof of Theorem~\ref{th:UAP_affine} (the second part)}
\label{sec:proof_UAP_diag_affine}

\begin{proof}[Proof of Theorem~\ref{th:UAP_affine} (the second part)]
    We consider $\mathcal{F}_{\text{diag}}(f)$ and $\mathcal{F}_{\text{rest}}(f)$ separately.

    (1) $\mathcal{H}(\mathcal{F}_{\text{diag}}(f))$ has the $C(\Omega)$-UAP for $\mathrm{Diff}_0(\mathbb{R}^d)$.
    
    Based on Theorem~\ref{th:UAP_ReLU_family}, we need to prove that
    \begin{align}
        \mathcal{H}(\mathcal{F}_1) \subset \text{cl}(\mathcal{H}(\mathcal{F}_{\text{diag}}(f)));
    \end{align}
    according to Lemma \ref{th:ODE_error_estimation} and Lemma \ref{th:flow_sum_f}, we only need to show that
    \begin{align}
        \mathcal{F}_1 \subset {\text{cl(span}(\mathcal{F}_{\text{diag}}(f)))}.
    \end{align}
    Notably, all $g=(g_1,\cdots,g_d)$ in $\mathcal{F}_1$ can be represented in the following summation form:
    \begin{align}
        g(x) = 
        \sum\limits_{i=1}^d \sum\limits_{j=1}^d g_i(x_j e_j) e_i
        =
        \sum\limits_{i=1}^d \sum\limits_{j=1}^d g_{ij}(x_j) e_i
    \end{align}
    where the $g_{ij}(x_j) = g_i( x_j e_j)$ are univariate functions (either ReLU or affine functions).

    Similar to the proof of the first part, here, the full nonlinearity of the current situation allows us to define $d^2$ nonlinear activation functions $\sigma_{ij}(\cdot) = e_i^T f(\cdot e_j + \hat b_{ij}), \hat b_{ij} \in \mathbb{R}^d,$ and construct neural networks that can approximate $g_{ij}(x_j)$. In fact, $g_{ij}(x_j)$ can be approximated by a neural network $h_i(x_j)$ with $N_{ij}$ neurons, the $\sigma_{ij}$ activation function, and parameters $s_{ijk},w_{ijk},b_{ijk} \in \mathbb{R}$; this neural network is described as follows:
    \begin{align}
        h_i(x_j)
        &= 
        \sum\limits_{k=1}^{N_{ij}}  s_{ijk} \sigma_{ij}(w_{ijk} x_j+b_{ijk}).
    \end{align}
    Hence, $g(x)$ can be approximated by $h(x)$:
    \begin{align}
        \label{eq:h_in_diagonal_affine}
        h(x) :&= 
        \sum_{i=1}^d \sum_{j=1}^d h_i(x_j) e_i \notag \\
        & =
        \sum_{i=1}^d \sum_{j=1}^d \sum_{k=1}^{N_{ij}} s_{ijk} e_i e_i^T f(w_{ijk} x_j e_j + \tilde{b}_{ijk}) \\
        &=
        \sum_{i=1}^d \sum_{j=1}^d \sum_{k=1}^{N_{ij}} s_{ijk} E_{ii} f(w_{ijk} E_{jj} x + \tilde{b}_{ijk}). \nonumber
    \end{align}
    Here, $\tilde{b}_{ijk} = b_{ijk} e_j + \hat{b}_{ij}$. Since $s_{ijk} E_{ii}, w_{ijk} E_{jj} \in \mathbb{R}^{d\times d}$ are diagonal matrices and $\tilde{b}_{ijk} \in \mathbb{R}^d$, we have that $h \in \text{span}(\mathcal{F}_{\text{diag}}(f))$. Therefore, the proof is finished since $g \in {\text{cl(span}(\mathcal{F}_{\text{diag}}(f)))}$ according to the UAP of neural networks.

    (2) $\mathcal{H}(\mathcal{F}_{\text{rest}}(f))$ has the $C(\Omega)$-UAP for $\mathrm{Diff}_0(\mathbb{R}^d)$. Recall that the restricted affine invariant family $\mathcal{F}_{\text{rest}}(f)$ is the set of all functions $ x \mapsto \tilde D f(\tilde A x+b)$, where $b \in \mathbb{R}^d$ and $\tilde D,\tilde A \in \mathbb{R}^{d\times d}$ are diagonal matrices, and the restrictions are as follows. 
    1) The entries of $\tilde D$ are $\pm 1$ or 0, and 
    2) the entries of $\tilde A$ are smaller than or equal to 1. The restriction imposed on $\tilde D$ can be removed easily, whereas the restriction imposed on $\tilde A$ requires a scaling technique.

    Continuing from Eq.~\eqref{eq:h_in_diagonal_affine}, we denote $v(x) = s_{ijk} E_{ii} f(w_{ijk} E_{jj} x + \tilde{b}_{ijk})$, choose a number $\lambda > \max\{1,|w_{ijk}|\}$ and define
    $$
    \tilde{v}(x) = \text{sgn}(s_{ijk}) E_{ii} f(\tfrac{w_{ijk}}{\lambda} E_{jj} x + \tilde{b}_{ijk}).
    $$ 
    Then, we have that $\tilde v \in \mathcal{F}_{\text{rest}}(f)$. In addition, we define a linear function $\Lambda = \Lambda_{j,\lambda}$ whose $j$-th coordinate is scaled by $\lambda$:
    \begin{align*}
        \Lambda_{j,\lambda}: x=(x_1,...,x_d) \mapsto (x_1,...,x_{j-1}, \lambda x_j, x_{j+1},...,x_d).
    \end{align*}
    Then, $v(x) = |s_{ijk}| \tilde v(\Lambda(x)) = |s_{ijk}| \lambda_{ij} \Lambda^{-1} \circ \tilde v \circ \Lambda(x)$, where $\lambda_{ij}=\Lambda_{j,\lambda}(e_i)$; this equals $\lambda$ for $i=j$ and $1$ for $i\neq j$. Given that $\Lambda_{j,\lambda}$ is a flow map $\Lambda_{j,\lambda} = \phi^{\ln(\lambda)}_{E_{jj}\cdot}$, we can represent $\phi^\tau_v$ as the following composition:
    \begin{align}
        \phi^\tau_v =
        \phi^{ |s_{ijk}| \lambda_{ij} \tau}_{\Lambda^{-1} \circ \tilde v \circ \Lambda}
        &= \Lambda^{-1} \circ \phi^{ |s_{ijk}|  \lambda_{ij} \tau}_{\tilde v } \circ \Lambda \notag \\
        & = \phi^{\ln(\lambda)}_{-E_{jj}\cdot} \circ \phi^{ |s_{ijk}|  \lambda_{ij} \tau}_{\tilde v } \circ \phi^{\ln(\lambda)}_{E_{jj}\cdot}.
    \end{align}
    According to (1) and Proposition~\ref{th:composition_approximation}, we only need to prove that all flow maps $\phi^{\tau}_{\pm E_{jj}\cdot}$ with $\tau>0$ and $j=1,...,d$ belong to $\text{cl}{(\mathcal{H}(\mathcal{F}_{\text{rest}}(f)))}$. This can be achieved by showing that the linear map $E_{jj}\cdot : x \mapsto E_{jj} x = x_j e_j$ is in the closure of $\text{span}(\mathcal{F}^*_{j})$ under the topology of $C(\mathcal{K}, \mathbb{R}^d)$, where $\mathcal{F}^*_j \subset \mathcal{F}_{\text{rest}}(f)$ is given by
    \begin{align*}
        \mathcal{F}^*_{j} = \{
            x \mapsto \pm E_{jj} f(\mu E_{jj} x +b_j) ~|~
            \mu \in [-1,1], b_j\in \mathbb{R}^d
            \}.
    \end{align*}
    
    Next, we prove that $E_{jj}\cdot \in {\text{cl(span}(\mathcal{F}^*_{j}))}$, which finishes the proof. The idea is simple: zooming in on $f$ near some points can lead to a function that is close to linear. This idea is formally stated below.

    1) Consider a scalar function $p_j:\mathbb{R}\to\mathbb{R}$, which maps $x_j$ to $e_j^T f(x_j e_j + \tilde b_j)$. Here, $\tilde b_j \in \mathbb{R}^d$ is chosen such that $p_j$ is nonlinear according to the fully coordinate nonlinearity of $f$. Since Lipschitz continuous functions are differentiable almost everywhere, there is a point $\xi \in \mathbb{R}$ such that the derivative $p_j'(\xi)$ exists and is nonzero. Zooming in on $p_j$ near $\xi$ enables us to approximate the linear function with a slope of $p_j'(\xi)$. Specifically, for any $\varepsilon>0$ and $M>1$, there exists a number $\delta\in (0,1)$ such that
    \begin{align}
        \Big| \frac{ p_j(\tilde x_j) - p_j(\xi)}{\tilde x_j - \xi}- p_j'(\xi) \Big| 
        \le \frac{|p_j'(\xi)| \varepsilon}{M}, 
    \end{align}
    for all $\tilde x_j \in (\xi-\delta, \xi + \delta), \tilde x_j \neq \xi$.
    
    Let $\tilde x_j = \xi + \tfrac{\delta}{M} x_j$, $x_j \in [-M,M]$; then, we have that
    \begin{align}
        \label{eq:inequality_linearize_p}
        \Big| \tfrac{1}{p_j'(\xi)} \tfrac{M}{\delta} \big( p_j(\xi + \tfrac{\delta}{M} x_j) - p_j(\xi) \big) - x_j \Big| \le \tfrac{|x_j|}{M} \varepsilon \le  \varepsilon, \notag
        \\
        \forall  x_j \in [-M, M].
    \end{align}

    2) The fully coordinate nonlinearity of $f$ also implies that there is a vector $\hat b_j \in \mathbb{R}^d$ such that $e_{j}^T f(\hat b_j)=:a_j\neq 0$. The inequality in Eq.~\eqref{eq:inequality_linearize_p} then indicates that $E_{jj}\cdot : x \mapsto x_j e_j$ can be arbitrarily approximated by functions $g_j$ of the following form:
    \begin{align*}
        g_j : x \mapsto d_j E_{jj} f( \mu E_{jj} x + b_j ) + 
        c_j E_{jj} f(\hat b_j), \notag \\
        x \in [-M,M]^d \supset \mathcal{K},
    \end{align*}
    where $d_j = \tfrac{1}{p_j'(\xi)} \tfrac{M}{\delta}, \mu = \tfrac{\delta}{M} \in (0,1), b_j = \xi e_j + \tilde b_j$ and $c_j = - \tfrac{d_j p_j(\xi)}{a_j}$. Since $\mu \in (0,1)$, both $E_{jj} f( \mu E_{jj} \cdot + b_j )$ and $E_{jj} f(\hat b_j)$ = $E_{jj} f(0\cdot + \hat b_j)$ are functions in $\mathcal{F}^*_j$; thus, we have that $g_j \in \text{span}(\mathcal{F}^*_j)$ and further that $E_{jj}\cdot \in {\text{cl(span}(\mathcal{F}^*_j))}$. The proof can now be finished.
\end{proof}

\section{Discussion}
\label{sec:discussion}

\subsection{Lie bracket-generating families}
\label{sec:bracket-generating}

Notably, if the control family $\mathcal{F}$ is smooth (i.e., all functions contained in $\mathcal{F}$ are smooth), then the $\text{span}(\mathcal{F})$ concept in Theorem \ref{th:UAP_span_F_to_HF} can naturally be extended to $\text{Lie}(\mathcal{F})$. This extension follows directly from the well-known Chow--Rashevskii theorem on controllability \cite{chow1940systeme}. Specifically, we present the following definition.

\begin{definition}[Lie bracket-generating family]
        \label{def:Lie_family}
        We say that a smooth family $\mathcal{F}$ is a Lie bracket-generating family on $\mathbb{R}^d$ if $\text{Lie}(\mathcal{F}) |_x = \mathbb{R}^d$ for all $x \in \mathbb{R}^d$.
\end{definition}
Here, $\text{Lie}(\mathcal{F})$ is the span of all vector fields of $\mathcal{F}$ and their iterated Lie brackets (of any order):
\begin{align*}
    \text{Lie}(\mathcal{F})
    &=
    \text{span}\{
        f_1, [f_1,f_2],[f_1,[f_2,f_3]],... |
        f_1, f_2, f_3,... \in \mathcal{F}
        \} \notag
\end{align*}
where the operator $[f,g](x) = \nabla g(x) f(x) - \nabla f(x) g(x)$ is the Lie bracket between two smooth vector fields $f$ and $g$.
The flow map $\phi^\tau_{[f,g]}$ of $[f,g]$ can be approximated by
$
\phi^{\sqrt{\tau}}_{-g} \circ
\phi^{\sqrt{\tau}}_{-f} \circ
\phi^{\sqrt{\tau}}_{g} \circ
\phi^{\sqrt{\tau}}_{f}
$
when $\tau$ is small. As a consequence, we have the following Corollary~\ref{th:UAP_Lie_F_to_HF} according to Theorem \ref{th:UAP_span_F_to_HF}.

\begin{corollary}
    \label{th:UAP_Lie_F_to_HF}
    Let $\Omega \subset \mathbb{R}^d$ be a compact domain and let $\mathcal{F}$ be a smooth, symmetric and Lipschitz continuous control family. If $\text{Lie}(\mathcal{F})$ is dense in $C(\mathcal{K},\mathbb{R}^d)$ for every compact domain $\mathcal{K} \subset \mathbb{R}^d$, then $\mathcal{H}(\mathcal{F})$ has the $C(\Omega)$-UAP for $\text{Diff}_0(\mathbb{R}^d)$.
\end{corollary}

Note that $\text{Lie}(\mathcal{F})$ can be further generalized to Lie brackets of Lipschitz continuous vector fields; it is interesting to investigate the UAP of control systems driven by Lipschitz continuous vector fields as in \cite{rampazzo2001set}.

\subsection{The meaning of a ``minimal control family"}

Since the UAP of $\mathcal{H}(\mathcal{F}_1)$ for nonlinear functions no longer holds when the $\text{ReLU}$ function is removed from $\mathcal{F}_1$, we refer to $\mathcal{F}_1$ as a minimal control family that includes all affine maps for universal approximation purposes (when $d \ge 2$). In this section, we explore the concept of a minimal control family in a stricter sense.

The control families $\mathcal{F}_0,\mathcal{F}_1,\mathcal{F}_2$ can be slimmed to finite sets since we have that $\mathcal{F}_0 = \text{span}(\mathcal{F}^*_0)$:
\begin{align}
    \mathcal{F}^*_0 = 
    \{ x \mapsto E_{ij}x \text{ or } x \mapsto e_i ~|~ i,j=1,2,...,d\},
\end{align}
where $e_i \in \mathbb{R}^d$ is the $i$-th unit coordinate vector and $E_{ij}$ is a $d \times d$ matrix that has zeros for all its entries except for a 1 at index $(i, j)$. Letting $\mathcal{F}^*_1 :=  \mathcal{F}^*_0 \cup \{\text{ReLU}(\cdot)\}$ and $\mathcal{F}^*_2 :=  \mathcal{F}^*_0 \cup \{\pm \text{ReLU}(\cdot)\}$, we have the following Corollary~\ref{th:th:UAP_ReLU_family_corollary} according to Theorem~\ref{th:UAP_ReLU_family}.

\begin{corollary}
    \label{th:th:UAP_ReLU_family_corollary}
    $\mathcal{H}(\mathcal{F}^*_2)$ has the $C$-UAP for $\mathrm{Diff}_0(\mathbb{R}^d)$. In addition, $\mathcal{H}(\mathcal{F}^*_1)$ has the $C$-UAP for $\mathrm{Diff}_0(\mathbb{R}^d)$ if and only if $d\ge 2$.
\end{corollary}

Note that since both $\mathcal{F}^*_1$ and $\mathcal{F}^*_2$ contain finite numbers of functions, one can examine their subsets individually to identify strictly minimal families, where removing any function results in the loss of the UAP. Constructing such families is an intriguing task that we leave for interested readers to explore.

Furthermore, as stated in Proposition 4.4 of \cite{Cuchiero2020Deep}, there exists a family $\hat{\mathcal{F}}_0$ containing only four functions from which all linear vector fields can be generated based on $\text{Lie}(\hat{\mathcal{F}}_0)$. This suggests that the number of elements contained in a strictly minimal family satisfying the UAP can be very small (e.g., fewer than six). Investigating the minimal number of elements contained in such families presents an interesting direction for future research.

\subsection{Affine control system}

Consider the following affine control system associated with $\mathcal{F}^*_1 :=  \mathcal{F}^*_0 \cup \{\text{ReLU}(\cdot)\}$:
\begin{align}
    \label{eq:affine_control_system}
    \dot x(t) = \sum_{i=1}^{m} u_i(t) f_i(x), \quad
    f_i \in 
    \mathcal{F}^*_1,
\end{align}
where $m=d^2+d+1$ and $u=(u_1,...,u_m):[0,\infty)\to \mathbb{R}^m$ is the control; then, Theorem~\ref{th:UAP_ReLU_family} leads to the following Corollary~\ref{th:th:UAP_ReLU_family_corollary_2}.

\begin{corollary}
        \label{th:th:UAP_ReLU_family_corollary_2}
For any OP diffeomorphism $\Phi$ of $\mathbb{R}^d, d\ge 2$, a compact domain $\Omega \subset \mathbb{R}^d$ and $\varepsilon>0$, there exists a piecewise constant control $u(t)$ and a time $\tau$ such that the flow map $\phi^\tau_{u}$ of Eq.~\eqref{eq:affine_control_system} uniformly approximates $\Phi$ with an accuracy below $\varepsilon$; \emph{i.e.}, 
$\|\Phi(x)-\phi^\tau_{u}(x)\| \le \varepsilon$ for all $x \in \Omega$.
\end{corollary}

\subsection{The $L^p$-UAP for continuous functions}

As mentioned earlier, $\text{Diff}_0(\mathbb{R}^d)$ is dense in $C(\Omega,\mathbb{R}^d)$ under the $L^p$ norm provided that the dimensionality $d$ is larger than one \cite{Brenier2003Approximation}. Therefore, our theorems can be directly extended to the $L^p$-UAP for continuous functions in compact domains. For instance, the following Corollary~\ref{th:UAP_well_function} is implied by our Theorem~\ref{th:UAP_affine}.
\begin{corollary}
    \label{th:UAP_well_function}
    Let $d\ge2$, $F:\mathbb{R}^d\to \mathbb{R}^d$ be a continuous function, and let $\mathcal{F}$ be a restricted affine invariant family. If the closure of $\text{span}(\mathcal{F})$ contains a fully coordinate nonlinear Lipschitz continuous function $f: \mathbb{R}^d\to \mathbb{R}^d$, then for any $p\in[1,\infty)$, a compact domain $\Omega \subset \mathbb{R}^d$ and $\varepsilon>0$, there exists an $\hat F \in \mathcal{H}(\mathcal{F})$ such that $\|\hat F-F\|_{L^p(\Omega)} \le \varepsilon.$
\end{corollary}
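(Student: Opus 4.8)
The plan is to reduce the statement to the second part of Theorem~\ref{th:UAP_affine}: I will show that $\mathcal{H}(\mathcal{F}_{\text{rest}}(f)) \subseteq \overline{\mathcal{H}(\mathcal{F})}$ (closure in $C(\mathcal{K})$ for every compact $\mathcal{K}$), deduce that $\mathcal{H}(\mathcal{F})$ inherits the $C(\Omega)$-UAP for $\mathrm{Diff}_0(\mathbb{R}^d)$, and then combine this with the $L^p$-density of $\mathrm{Diff}_0(\mathbb{R}^d)$ in $C(\mathbb{R}^d,\mathbb{R}^d)$ from \cite{Brenier2003Approximation} (which needs $d\ge2$).

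First I would record two elementary facts. A well function $f$ is Lipschitz by Definition~\ref{def:well_function}, and it is fully nonlinear: for each component $f_i$, whose zero set is a nonempty bounded open convex set $\overline{\mathcal{K}_i}$, and each coordinate direction $e_j$, pick any $p\in\mathcal{K}_i$ and set $b_{ij}=p$; then $t\mapsto f_i(t e_j+p)$ vanishes on a nondegenerate interval around $t=0$ (openness of $\mathcal{K}_i$) but is nonzero for large $|t|$ (boundedness of $\overline{\mathcal{K}_i}$), hence is not affine. So Theorem~\ref{th:UAP_affine} applies and $\mathcal{H}(\mathcal{F}_{\text{rest}}(f))$ has $C(\Omega)$-UAP for $\mathrm{Diff}_0(\mathbb{R}^d)$. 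Second, a restricted affine invariant family $\mathcal{F}$ is automatically symmetric: choosing $D=-I$, $A=I$, $b=0$ (which respect the sign and norm restrictions) gives $h\in\mathcal{F}\Rightarrow -h\in\mathcal{F}$.

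The core step is to show $\phi^\tau_v\in\overline{\mathcal{H}(\mathcal{F})}$ in $C(\Omega)$ for every $v\in\mathcal{F}_{\text{rest}}(f)$, $\tau\ge0$ and compact $\Omega$. Write $v=\tilde D f(\tilde A\cdot+b)$ with $\tilde D$ diagonal of entries in $\{0,\pm1\}$ and $\tilde A$ diagonal with entries bounded by $1$ in absolute value. Fix $\tau$ and the inflated compact set $\Omega_\tau$ of Lemma~\ref{th:ODE_error_estimation} (applicable since $v$ is Lipschitz), together with the tolerance $\delta$ it provides. Since $f\in\overline{\mathrm{span}(\mathcal{F})}$, choose $\tilde f=\sum_k c_k h_k\in\mathrm{span}(\mathcal{F})$, $h_k\in\mathcal{F}$, with $\|f-\tilde f\|<\delta$ uniformly on the compact set $\overline{\tilde A\,\Omega_\tau+b}$. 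As $\|\tilde D\|_{\mathrm{op}}\le1$, the field $\hat v:=\tilde D\tilde f(\tilde A\cdot+b)=\sum_k c_k\,\tilde D h_k(\tilde A\cdot+b)$ is $\delta$-close to $v$ on $\Omega_\tau$, and each $\tilde D h_k(\tilde A\cdot+b)\in\mathcal{F}$ by restricted affine invariance (the sign/norm restrictions defining $\mathcal{F}_{\text{rest}}$ are exactly those allowed in the invariance). Using symmetry of $\mathcal{F}$ to absorb the signs of the $c_k$, $\hat v$ is a nonnegative combination of finitely many elements of $\mathcal{F}$, so Lemma~\ref{th:flow_sum_f} gives $\phi^\tau_{\hat v}\in\overline{\mathcal{H}(\mathcal{F})}$ in $C(\Omega)$, while Lemma~\ref{th:ODE_error_estimation} makes $\|\phi^\tau_v-\phi^\tau_{\hat v}\|_{C(\Omega)}$ as small as desired; hence $\phi^\tau_v\in\overline{\mathcal{H}(\mathcal{F})}$. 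Applying Proposition~\ref{th:composition_approximation} with the composition-closed class $\mathcal{H}(\mathcal{F})$ approximating each flow-map factor then gives $\mathcal{H}(\mathcal{F}_{\text{rest}}(f))\subseteq\overline{\mathcal{H}(\mathcal{F})}$.

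Assembling the pieces: $\overline{\mathcal{H}(\mathcal{F})}\supseteq\mathcal{H}(\mathcal{F}_{\text{rest}}(f))$ has $C(\Omega)$-UAP for $\mathrm{Diff}_0(\mathbb{R}^d)$, hence so does $\mathcal{H}(\mathcal{F})$. Finally, given $F\in C(\mathbb{R}^d,\mathbb{R}^d)$, $p\in[1,\infty)$ and $\varepsilon>0$, pick $\Psi\in\mathrm{Diff}_0(\mathbb{R}^d)$ with $\|\Psi-F\|_{L^p(\Omega)}\le\varepsilon/2$ (using \cite{Brenier2003Approximation} and $d\ge2$), then pick $\hat F\in\mathcal{H}(\mathcal{F})$ with $\|\hat F-\Psi\|_{C(\Omega)}$ small enough that $\|\hat F-\Psi\|_{L^p(\Omega)}\le\varepsilon/2$, so $\|\hat F-F\|_{L^p(\Omega)}\le\varepsilon$. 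I expect the main obstacle to be the core step, precisely the point that the approximant $\hat v$ stays inside $\mathrm{span}(\mathcal{F})$ — which works only because the sign and norm restrictions in ``restricted affine invariant'' match those defining $\mathcal{F}_{\text{rest}}$ and because $\mathcal{F}$ is thereby symmetric — together with ordering the two approximations so that $\Omega_\tau$ and $\delta$ are fixed before $\tilde f$ is chosen.
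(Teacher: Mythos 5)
Your proposal is correct, and it follows the same overall route as the paper: show that a well function is necessarily fully nonlinear, invoke the second part of Theorem~\ref{th:UAP_affine} to obtain $C(\Omega)$-UAP for $\mathrm{Diff}_0(\mathbb{R}^d)$, and pass to $L^p$-UAP via \cite{Brenier2003Approximation} using $d\ge 2$. The difference is one of completeness rather than strategy. The paper's proof consists only of the first of these steps — one sentence showing that $f_i(\cdot\,e_j + b)$ vanishes on a nondegenerate interval yet is nonzero far away, hence is not affine — and silently treats the transfer from $\mathcal{H}(\mathcal{F}_{\text{rest}}(f))$ (the object controlled by Theorem~\ref{th:UAP_affine}) to $\mathcal{H}(\mathcal{F})$ as automatic. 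That transfer is the step that actually requires work, because $f$ is only assumed to lie in $\overline{\mathrm{span}(\mathcal{F})}$ rather than in $\mathcal{F}$ itself, and you carry it out correctly: approximate each $v = \tilde D f(\tilde A\cdot+b)\in\mathcal{F}_{\text{rest}}(f)$ by $\hat v = \tilde D\tilde f(\tilde A\cdot+b)$ with $\tilde f \in \mathrm{span}(\mathcal{F})$; use restricted affine invariance to see that $\hat v$ is a linear combination of elements of $\mathcal{F}$; absorb negative coefficients via the symmetry of $\mathcal{F}$ (which you correctly note follows from restricted affine invariance with $D=-I$); apply Lemma~\ref{th:flow_sum_f} to put $\phi^\tau_{\hat v}$ in $\overline{\mathcal{H}(\mathcal{F})}$; use Lemma~\ref{th:ODE_error_estimation} to control $\|\phi^\tau_v - \phi^\tau_{\hat v}\|_{C(\Omega)}$; and finish with Proposition~\ref{th:composition_approximation}. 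Your remark about the order of quantifiers — fixing $\tau$, $\Omega_\tau$ and $\delta$ before choosing $\tilde f$ on $\overline{\tilde A\,\Omega_\tau + b}$, and exploiting $\|\tilde D\|_{\mathrm{op}}\le 1$ — is exactly what makes the error estimate go through. In short, your argument is not a different proof but a correct and useful elaboration that fills a real gap left by the paper's terse proof.
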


Note that the main theorem in \cite{Li2022Deep} shows a similar result, but the function $f$ is required to be a so-called well function; \emph{i.e.}, the set $\{x\in \mathbb{R}^d ~|~ f_i(x)=0\}$ is bounded and convex for each component $f_i$ of $f$. It is easy to check that a well function $f$ is fully coordinate nonlinear. In fact, when a vector $b \in \mathcal{K}$ is chosen, $f_i(\cdot e_i + b) \in C(\mathbb{R},\mathbb{R})$ is nonlinear for all $i=1,...,d$. The reason for this result is that if $f_i(\cdot e_i + b)$ is linear, then it must be zero, which contradicts the definition since it must be nonzero outside a compact domain.

\subsection{The $C$-UAP for continuous functions}

Utilizing topological theory, we can extend the $C$-UAP result to all continuous functions (rather than just diffeomorphisms) by leveraging additional dimensions.

In fact, Whitney's theorem from the field of differential topology (see, for example, \cite[P.26]{Hirsch1976Differential}) suggests that for any compact domain $\mathcal{K} \subset \mathbb{R}^{n_x}$, $f \in C(\mathcal{K},\mathbb{R}^{n_y})$, $d \ge \max(2n_x + 1, n_y)$, and any $\varepsilon > 0$, there exists an OP diffeomorphism $\Phi$ on $\mathbb{R}^d$, along with two linear maps $\alpha: \mathbb{R}^{n_x} \to \mathbb{R}^d$ and $\beta: \mathbb{R}^d \to \mathbb{R}^{n_y}$ such that
\begin{align}
\|f - \beta \circ \Phi \circ \alpha\|_{C(\mathcal{K})} \le \varepsilon .
\end{align}

Building on this fact, our results can be directly extended to the $C$-UAP of the flowmap space ($\mathcal{H}(\mathcal{F}_1)$, for example) for continuous functions by utilizing additional dimensions. For instance, the following corollary (using the above notation) follows from Theorem~\ref{th:UAP_ReLU_family}.

\begin{corollary}
Let $d \ge \max(2 n_x+1,n_y)$; then, the hypothesis space
$\mathcal{H} = \{\beta \circ \Phi \circ \alpha | \Phi \in \mathcal{H}(\mathcal{F}_1)\}$
has the $C(\mathcal{K})$-UAP for $C(\mathcal{K},\mathbb{R}^{n_y})$.
\end{corollary}

\subsection{Universal interpolation property}

Another interesting topic related to the UAP of dynamical systems is the universal interpolation property (UIP) \cite{Cuchiero2020Deep}. Unlike the UAP, the UIP requires the examined dynamical system to interpolate any continuous function on a compact domain, and the two properties cannot be derived from each other \cite{Cheng2023Interpolation}. Exploring the UIP possessed by dynamical systems associated with more general control families would be an intriguing direction for future research.

\section{Conclusion}
\label{sec:summary}

In this paper, we introduce several sufficient conditions for the control family $\mathcal{F}$ such that the flow maps composing the dynamical systems associated with $\mathcal{F}$ can uniformly approximate all OP diffeomorphisms of $\mathbb{R}^d$ on any compact domain $\Omega$. In particular, the UAP is verified for the minimal control families $\mathcal{F}_1$ and $\mathcal{F}_2$.

In terms of OP diffeomorphisms, our results can be directly extended to the $L^p$-UAP and $C$-UAP for continuous maps.

However, the constructed control families, i.e., $\mathcal{F}_1$, $\mathcal{F}_1^*$, and $\mathcal{F}_2$, are very special. Although the employed ReLU function can be directly extended to a wide range of functions, such as softplus, leaky-ReLU, or other ReLU-like functions, more general finite families $\mathcal{F}^*=\{f_1,...,f_m\}$ will be considered so that the flows of an affine control system $\dot x(t) = \sum_{i=1}^{m} u_i(t) f_i(x)$ with a control $u = (u_1,...,u_m)$ are universal approximators for continuous functions in compact domains. The Chow--Rashevskii theorem in control theory suggests considering $\mathcal{F}^*$ as a Lie bracket-generating family on $\mathbb{R}^d$ \cite{Boscain2019Introduction}. It is interesting to examine whether assuming that $\mathcal{F}^*$ further contains a nonlinear function is sufficient. We leave this question for future work.

\appendix

\subsection{Proof of Lemma~ \ref{th:ODE_error_estimation}}
\begin{proof}
    First, let us consider the case in which $\|f_1(x)-f_2(x)\|< \delta$ for all $x \in \mathbb{R}^d$. We define the error $e(x,t)=\phi_{f_1}^t(x) - \phi_{f_2}^t(x),$ and use the integral form of an ODE:
    \begin{align}
        \phi_{f_i}^t(x) = \phi_i^0(x) + \int_0^t f_i(\phi_{f_i}^s(x)) d s,
    \end{align}
    Then, we have that
    \begin{align}
        \|e(x,t)\|
        = 
            & \Big\|\int_0^t 
            f_1(\phi_{f_1}^s(x)) - f_2(\phi_{f_2}^s(x)) ds
            \Big\| \nonumber\\
        \le 
            &\int_0^t \Big(
            \|f_1(\phi_{f_1}^s(x))-f_1\phi_{f_2}^s(x))\| \nonumber\\
            & +
            \|f_1(\phi_{f_2}^s(x))-f_2(\phi_{f_2}^s(x))\|
            \Big)ds \nonumber\\
        \le 
            &\int_0^t 
            (L \|e(x,s)\|+\delta)
            ds \nonumber\\
        \le 
            &\delta t + L \int_0^t \|e(x,s)\| d s. 
    \end{align}
    By employing the well-known Gronwall inequality, we have that $\|e(x,t)\| \le \delta t e^{L t}$. Then, letting $\delta = \min(1,\frac{\varepsilon}{\tau e^{L \tau}})$, we have that $\|e(x,t)\| \le \varepsilon$ for all $x \in \mathbb{R}^d$ and $t \in [0,\tau]$.
    
    Now, we relax $x \in \mathbb{R}^d$ to $x \in \Omega_{\tau}$. This relaxation only requires that $\phi_{f_i}^s(x) \in \Omega_\tau$
    for all $x \in \Omega$ and $s \in [0,\tau]$. Since $f_1(x)$ is Lipschitz continuous and bounded on $\Omega$, we have the following:
    \begin{align}
        \|\phi_{f_1}^t(x) -x \| 
        \le 
        &\int_0^t \|f_1(\phi_{f_1}^s(x))-f_1(x)\|ds 
        +\int_0^t\|f_1(x)\|ds \nonumber\\
        \le 
        &L\int_0^t\|\phi_{f_1}^s(x)-x\|ds + V t \nonumber\\
        \le 
        & V t e^{L t} 
        \le 
        V \tau e^{L \tau}.
\end{align}
    In addition to $\|e(x,t)\| \le \delta t e^{L t} \le \tau e^{L \tau}$, we have that $\|\phi_{f_2}^t(x) -x\| \le (V+1) \tau e^{L \tau}$.
    Therefore, both $\phi_{f_1}^t(x)$ and $\phi_{f_2}^t(x)$ are in the domain $\Omega_\tau$. This completes the proof.
\end{proof}

\subsection{Proof of Lemma~\ref{th:split_method}}
\begin{proof}
    We only prove the case in which $m=2$, while the general cases concerning $m$ can be proved in the same way. In addition, since $v$ is piecewise constant with respect to $t$, $x(\tau)$ is a composition of finite flow maps
$x(\tau) = \phi^\tau_{v(\cdot,t_q)} \circ \cdots \circ \phi^\tau_{v(\cdot,t_1)}$,
where $t_1,...,t_q$ are the break points. According to Proposition~\ref{th:composition_approximation}, we only need to approximate each $\phi^\tau_{v(\cdot,t_q)}$. For this reason, it is sufficient to provide a proof for the case in which $v_i$ is constant for $t$ within the time interval $[0,\tau]$, \emph{i.e.,} $v_i(x,t) = f_i(x)$ and $v(x,t) = f(x)=f_1(x)+f_2(x)$. 
In this case, the iterative process related to $x_k$ is
    \begin{align}
        x_{k+1} &= x_k + \Delta t f_1(x_k) + \Delta t f_2(x_k + \Delta t f_1(x_k)).
    \end{align}
    We define $ \hat x(t) = x_k $ for $t \in [t_k, t_{k+1})$; then, we have that
    \begin{align}
        \hat x(t)
            &= x_k \notag \\  
            &= x_{k-1} + \int_{t_{k-1}}^{t_k} \big( f_1( \hat x(s)) + f_2( \hat x(s) \notag + \Delta t f_1( \hat x(s))) \big) d s\\
            &= x_0 + \int_0^{t_k} \big( f_1( \hat x(s)) + f_2( \hat x(s) + \Delta t f_1( \hat x(s))) \big) d s, \nonumber\\
        x(t) &=
            x_0 + \int_0^t \big( f_1(x(s)) + f_2(x(s) \big) d s.
    \end{align}
    Then, the error $e(t) = x(t) - \hat x(t)$ can be estimated by using the Lipschitz constant:
    \begin{align}
        \|e(t)\| \le  
        & \int_{t_k}^t \| f(x(s)) \| d s \notag \\ 
        &+ \int_0^{t_k} \big( 2 L \|e(s)\| + L \Delta t \|f_1(\hat x(s))\| \big) ds \nonumber\\ 
        \le
        & 2 M \Delta t + \int_0^t \big( (2 L+ L^2 \Delta t) \|e(s)\| \notag \\ 
        & + L \Delta t \|f_1(x(s))\| \big) ds \nonumber\\
        \le 
        &(2 + L t ) M \Delta t + \int_0^t  (2 + L \Delta t) L \|e(s)\| ds,
    \end{align}
    where $M = \sup \{ \|f_i(\phi^s_f(x_0))\| ~|~ s \in [0,\tau], x_0 \in \Omega, i=1,...,d\}$ is a well-defined number. Let $\Delta t$ be sufficiently small such that $L \Delta t \le 1/2$; then, the estimation process simply becomes
    \begin{align}
        \|e(t)\| 
        & \le m (1 + L t ) M \Delta t + \int_0^t (2m-1) L \|e(s)\| ds \notag\\
        & \le  m  M  \text{e}^{ 2 m L t} \Delta t.
    \end{align}
    Here, the Gronwall inequality is employed again, and the estimation procedure is valid for all cases in which $m\ge 2$. Next, let $\Delta t \le \min \big\{ \tfrac{1}{2L}, \tfrac{\text{e}^{-2 m L \tau} \varepsilon}{m M} \big\}$; then, the inequality $\|x(\tau) - x_n\| \le \varepsilon$ is satisfied for all $x_0 \in \Omega$.
\end{proof}

\section*{Acknowledgments}

We would like to thank Prof. Paulo Tabuada and Prof. Bahman Gharesifard for insightful discussions on their results.

We also thank the anonymous reviewers for their valuable comments and useful suggestions.
This work was funded by the National Natural Science Foundation of China under grants 12494543 and 12171043 and the Fundamental Research Funds for the Central Universities.

\bibliography{refs}
\bibliographystyle{plain}



\end{document}